\documentclass[lettersize,journal]{IEEEtran}
\usepackage{cite}
\usepackage{amsmath,amssymb,amsfonts,amsthm}
\usepackage{xcolor}
\usepackage{algorithm,algpseudocode}
\usepackage{array}
\usepackage{textcomp}
\usepackage{stfloats}
\usepackage{url}
\usepackage{soul}
\usepackage{graphicx}
\hyphenation{op-tical net-works semi-conduc-tor IEEE-Xplore}

\usepackage{wrapfig,lipsum}
\usepackage{subfigure}
\usepackage{balance}
\usepackage{enumitem} 
\usepackage{multirow}
\usepackage{mathrsfs}
\usepackage{diagbox}
\usepackage{xcolor}
\usepackage{nameref}
\usepackage[utf8]{inputenc}
\usepackage{booktabs}
\usepackage{caption}

\newtheorem{theorem}{Theorem}
\newtheorem{lemma}{Lemma}

\newtheorem{definition}{Definition}

\allowdisplaybreaks
\newcommand{\target}{\mathcal{T}}
\newcommand{\anom}{\mathcal{A}}
\newcommand{\threshold}{\Theta}
\newcommand{\simil}{\vec{s}}
\newcommand{\budget}{K}
\newcommand{\FA}{\widehat{A}}
\newcommand{\FR}{\widehat{R}}



\title{Coupled-Space Attacks against Random-Walk-based Anomaly Detection}

\author{Yuni Lai, Marcin Waniek, Liying Li, Jingwen Wu, Yulin Zhu, Tomasz P. Michalak, Talal Rahwan, Kai Zhou
  \thanks{Yuni Lai, Liying Li, Jingwen Wu, Yulin Zhu, and Kai Zhou are with the Department of Computing, The Hong Kong Polytechnic University. Marcin Waniek and Talal Rahwan are with the Department of Computer Science, New York University Abu Dhabi. Tomasz P. Michalak is with the Department of Computer Science, University of Warsaw and IDEAS NCBR. Kai Zhou is the corresponding author with email: kaizhou@polyu.edu.hk.}
}




\begin{document}
\maketitle

\begin{abstract}


Random Walks-based Anomaly Detection (RWAD) is commonly used to identify anomalous patterns in various applications. An intriguing characteristic of RWAD is that the input graph can either be pre-existing or constructed from raw features. Consequently, there are two potential attack surfaces against RWAD: graph-space attacks and feature-space attacks. In this paper, we explore this vulnerability by designing practical coupled-space attacks, investigating the interplay between graph-space and feature-space attacks. To this end, we conduct a thorough complexity analysis, proving that attacking RWAD is NP-hard. Then, we proceed to formulate the graph-space attack as a bi-level optimization problem and propose two strategies to solve it: alternative iteration (alterI-attack) or utilizing the closed-form solution of the random walk model (cf-attack). Finally, we utilize the results from the graph-space attacks as guidance to design more powerful feature-space attacks (i.e., graph-guided attacks). Comprehensive experiments demonstrate that our proposed attacks are effective in enabling the target nodes from RWAD with a limited attack budget. In addition, we conduct transfer attack experiments in a black-box setting, which show that our feature attack significantly decreases the anomaly scores of target nodes. Our study opens the door to studying the coupled-space attack against graph anomaly detection in which the graph space relies on the feature space.

\end{abstract}

\begin{IEEEkeywords}
Graph-based anomaly detection; Random walk; Poisoning attack; Adversarial attacks; Security and privacy.
\end{IEEEkeywords}

\section{Introduction}


Graph-based Anomaly Detection (GAD) has gained significant research attention in recent years due to the widespread use of graph data across various application domains. GAD algorithms are designed to identify anomalies in a graph, 
where nodes represent entities and edges indicate their relations. Essentially, a GAD algorithm works by initially measuring the similarities among nodes and then identifying nodes that are less similar to the rest as anomalous.
Random Walks (RWs), such as PageRank~\cite{page1999pagerank}, have emerged as a powerful tool for measuring node similarities over graphs and have become a fundamental component of many GAD systems that are extensively employed in diverse applications. 
Notably, Random-Walk-based Anomaly Detection (RWAD) has been employed in detecting money laundering within the financial industry~\cite{oliveira2021guiltywalker}, identifying fraudsters in online shopping~\cite{li2019trust}, uncovering fake accounts in social networks~\cite{yu2006sybilguard,yu2008sybillimit,shi2013sybilshield,wei2013sybildefender,jia2017random}, and serving as a general unsupervised outlier detection method for bipartite graphs~\cite{sun2005neighborhood,goodman2015using} (e.g., review data in recommender systems, stock market transaction data, and short message service), multivariate time series data~\cite{cheng2009detection,ren2017piecewise} (e.g., electrocardiograms data), and the most common feature data~\cite{yao2012anomaly,moonesinghe2006outlier,pang2016outlier,wang2018new,wang2019applying} (e.g., network intrusion detection data). Moreover, random walk has also been adopted to improve large-scale graph anomalies detection~\cite{ioannidis2021unveiling} and enhance deep-learning-based anomalies detection~\cite{zhao2020error}. These diverse applications underscore the important role of RWAD in ensuring system security.


As the accuracy of predictions produced by the RWAD methods is crucial for system security, it is essential to assess their robustness in a real-world adversarial environment. In fact, the individuals that RWAD aims to detect may have both the incentive and capability to evade detection. 
For instance, adversaries controlling bank accounts to be used in money laundering schemes may wish to remain undetected to continue their malicious activities. They could carefully manage the everyday transactions on the accounts to make them appear similar to normal ones, causing the system to falsely classify them as benign. In essence, in an adversarial environment, attackers can intentionally manipulate the input data to RWAD in order to mislead its predictions, leading to what is known as \textit{data poisoning attacks} in the literature.

\begin{figure}[tbp]
\centering
\includegraphics[width=1.0\linewidth]{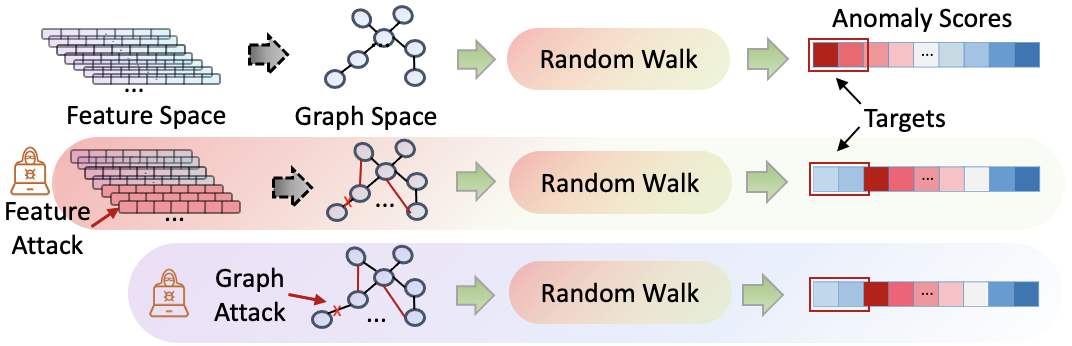}
\caption{
Illustration of RW-based anomaly detection and the distinction between attacks in the graph space and feature space.}
\label{fig:RW_AnomalyDetection}
\vspace{-10pt}
\end{figure}

However, studying the adversarial robustness of RWAD imposes new challenges due to an intriguing characteristic of RWAD. Specifically, in an RWAD system,  the graph is often not directly accessible and needs to be \textit{constructed} from raw data. 
As illustrated in Fig.~\ref{fig:RW_AnomalyDetection} (top), 
entities in the system are represented as vectors in a feature space, and a graph is then constructed based on the relationships among the entities \textit{as determined by their feature vectors}. For instance, a proximity graph can be constructed based on feature similarity. This graph is then fed into the RWAD system, which produces anomaly scores for each node in the graph. 

Consequently, there are \textit{two potential attack surfaces} against RWAD: \textbf{graph-space} attacks and \textbf{feature-space} attacks.
In graph-space attacks (Fig.~\ref{fig:RW_AnomalyDetection}, bottom), the attacker can directly modify the structure of the graph, which is a common assumption made by previous works \cite{zugner2018adversarial,z:ugner2018adversarial,zhu2022binarizedattack} that design structural attacks on graphs. In feature-space attacks (Fig.~\ref{fig:RW_AnomalyDetection}, middle), the attacker does not have direct control over the graph but can modify the features, which indirectly affects the graph's structure. It is worth noting that in the latter case, where the graph is not directly accessible,  feature-space attacks are deemed more realistic.

Unfortunately, previous research treats attacks in the graph space and feature space rather separately. On the one hand, many existing works have investigated \textit{structural attacks}~\cite{zugner2018adversarial,zhou2019attacking,zhou2020robust,zhu2022binarizedattack} against a wide range of graph learning models. On the other hand, another line of research has focused on studying feature manipulation attacks~\cite{goodfellow2014explaining,carlini2017towards,moosavi2017universal} primarily in the computer vision domain, where the data objects represented by features are independent of each other.
In contrast, one unique characteristic of RWAD is that it examines data objects that are interdependent (that is, objects are connected in a graph determined by their features), which makes the interplay between the graph-space and feature-space attacks possible. Thus, for the first time, we aim to investigate the adversarial robustness of RWAD under \textit{coupled-space} attacks, with a focus on designing more powerful realistic feature-space attacks from the guidance of graph-space attacks.

Towards this end, we begin with a formal analysis of graph-space attacks. Specifically, we define the attacks in the graph space as a decision problem. We ask whether an attacker can reduce the anomaly scores of the target nodes below a certain threshold, thereby classifying them as benign, by modifying a limited number of edges in a given graph. Our in-depth complexity analysis shows that this problem is NP-hard for both \textit{directed} and \textit{undirected} graphs. Furthermore, since feature-space attacks ultimately modify edges, they can be viewed as special cases of this problem, and the hardness results remain applicable. The hardness results serve as the anchor for us to investigate efficient attack algorithms in both the graph space and feature space.

We then proceed to design effective graph-space attacks, which are formulated as an optimization problem with the objective of minimizing the target nodes' anomaly scores output by RWAD. Solving this optimization problem encounters several challenges.
Firstly, random walk (PageRank) is an iterative algorithm that operates on an input graph, thus any changes made to the graph will require the iterations to be re-executed. Consequently, attacks against RWAD will result in a \textit{bi-level} optimization where the inner layer involves complex iteration. Second, the discrete nature of graph structure further complicates the solving of the optimization. 
To address these challenges, we propose two efficient attacks:  \textbf{alterI}-attack and \textbf{cf}-attack. The former is an iterative approach that optimizes the attack objective by projected gradient descent (PGD)~\cite{wang2019attacking} and updates the random walk model alternatively. The latter utilizes the closed form of the random walk model to transform the bi-level optimization into a single-level problem.

Finally, we investigate the more realistic feature-space attacks. Our major innovation is to use the results from the \textit{virtual} graph-space attack as our guidance to design more powerful feature-space attacks. Specifically, we utilize the guidance from two aspects: selecting the attack nodes and formulating an effective attack objective. Through extensive experiments, we demonstrate that by fully exploring the dynamics between attacks in coupled spaces, more powerful attacks could be designed, revealing more realistic security threats against RWAD systems.

The main contributions are summarized as follows:

\begin{itemize}
    \item We study the adversarial robustness of RWAD, for the first time, exploring the interplay between attacks in coupled spaces.
    \item We present a deep theoretical analysis of the hardness of attacking RWAD, which is proved to be NP-hard on both directed and undirected graphs.
    \item We propose effective attacks in coupled spaces. In particular, we innovatively utilize the results from the graph-space attacks as guidance to design more powerful feature-space attacks.
    \item We conduct comprehensive experiments to demonstrate the effectiveness of our proposed attacks. Especially we also transfer our attacks to other anomaly detection methods in the feature space. It is shown that our graph-guided feature-space attack remains effective even without knowing the target models, demonstrating a realistic threat in real-world application scenarios.
\end{itemize}

In summary, our work uncovers a unique vulnerability of RWAD and unleashes the power of attackers by exploring the interplay between attacks in coupled spaces, significantly advancing our knowledge of the adversarial robustness of RWAD in deployment.

\textbf{Road Map}: Related works (\ref{RelatedWork}) $\Rightarrow$ Target RWAD models (\ref{RW_AnomalyDetection}) $\Rightarrow$ Problem statements (\ref{ProblemSate}) $\Rightarrow$ Complexity analysis of attacks (\ref{Complexity}) $\Rightarrow$ Effective graph-space attacks (\ref{structural_attack}) $\Rightarrow$ Graph-guided feature-space attacks (\ref{FeatureAttack}) $\Rightarrow$ Evaluation (\ref{Experiments})  $\Rightarrow$ Conclusion (\ref{conclusion}).

\section{Related Works}
\label{RelatedWork}
\subsection{Graph-based anomaly detection}

This paper focus on unsupervised and node-level anomaly detection on plain and static graph. 
\textit{Random-walk-based techniques}~\cite{sun2005neighborhood,moonesinghe2006outlier,cheng2009detection,yao2012anomaly,pang2016outlier,you2017provable,wang2018new} discussed in this paper, exploiting random walk as a similarity or connectivity measurement.
Traditional \textit{feature-based} techniques \cite{akoglu2010oddball,ding2012intrusion,hooi2016fraudar} utilize statistical features, such as in and out node degrees, to extract structural information from graphs and transform the GAD to usual
anomaly detection problem. For example, OddBall ~\cite{akoglu2010oddball} built a regression model based on the density power law to estimate anomalous local patterns. These labor-intensive handcrafted features have limitations on generalizing to unknown anomalies. Beyond handcrafted features, \textit{network-representation-based} techniques, such as DeepWalk~\cite{perozzi2014deepwalk} and Node2Vec~\cite{grover2016node2vec}, are widely exploited to extract a more flexible feature representation which can be used for downstream anomaly detection tasks\cite{bandyopadhyay2020outlier}. Most recent work mainly focuses on investigating \textit{deep learning based} anomaly detection, such as DOMINANT \cite{ding2019deep} and GAL\cite{zhao2020error}.

\subsection{Adversarial attacks on graph}
Our work belongs to the category of targeted and poisoning adversarial attacks. Here, we include  the most related existing attacks on graphs. There are some previous research
efforts on the random walk (RW) based models. 
\cite{bojchevski2019adversarial} reformulate the DeepWalk model as a matrix factorization form to reduce the bi-level optimization to single-level, and then optimize the untargeted attack loss by optimizing the graph spectrum. \cite{chang2022adversarial} make further improvement to make the spectrum-based attack works in a black-box system. Different from our attacks on RW-based anomaly detection, they mainly focus on attacking
node embedding generated by RW.

In addition to RW-based model, Nettack~\cite{zugner2018adversarial}, Metattack~\cite{z:ugner2018adversarial} 
are two strong poisoning attacks for the GCN-based models. Nettack greedily selects the perturbation edges among the candidate sets with the largest gradient obtained by incremental updates. Metattack greedily selects the perturbation edges with the largest gradient obtained by meta-gradient. Note that, both of these methods can be extended to attack node features. However, Nettack does not introduce the attack node selection, and Metattack is only applicable to binary features. Furthermore, the proximity graph is different from other graphs. The proximity graph is changing along with features, while the node feature attack in \cite{zugner2018adversarial},\cite{z:ugner2018adversarial} have fixed graph structures. 
For belief propagation models, \cite{wang2019attacking} introduced a poisoning attack for graph data. For another classical graph-based anomaly detection model called OddBall, \cite{zhu2022binarizedattack} proposed BinarizedAttack which is well-designed for the binary property of edges. For graph contrastive learning, \cite{zhang2022unsupervised} attack the graph embedding by greedily choosing the most informative edges. Beyond gradient-based methods, perturbing the intrinsic property of graphs, such as spectral changes \cite{lin2022graph} shows to be more effective, but it is only suitable for untargeted attacks. These works are orthogonal to our study.

\section{Random-Walk-based Anomaly Detection}
\label{RW_AnomalyDetection}


In this section, we introduce the necessary background on unsupervised random-walk-based anomaly detection (RWAD). We first present an overview of the framework with an emphasis on the role of random walk (RW) in anomaly detection, and then give two concrete exemplar RWAD models, which are also the target models considered in this paper.


\subsection{Overview}
\subsubsection{Input data as a graph} In general, RWAD takes a \textit{plain} graph as input and produces anomaly scores for the nodes in the graph as output. In practice, the input graph could be either directly available or constructed from raw data. Depending on the levels of accessibility of the graph, we divide RWAD systems into two types: 
\begin{itemize}
    \item RWAD over \textit{directly accessible graph} (\textit{Di-RWAD}):
    In this case, the input to RWAD is a graph that represents relational data in a specific application. For instance, 
    in recommender systems, the rating towards products given be customers on E-commerce platforms can be modeled as a \textit{bipartite graph}.
    \item RWAD over \textit{indirectly accessible graph} (\textit{InDi-RWAD}): In this case, the input to RWAD are raw features of entities, and a graph is constructed as a data preprocessing step in the pipeline of anomaly detection (Fig.~\ref{fig:RW_AnomalyDetection}, top). Typically, given the feature vectors, a \textit{proximity graph} is constructed, where the nodes represent entities and an edge exists between two nodes only if they are similar enough
    in certain similarity metrics~\cite{yao2012anomaly,moonesinghe2006outlier, pang2016outlier, wang2018new}.
\end{itemize}
We note that in both cases, RWAD will operate on graphs; however, the difference lies in whether the graph is directly accessible. Later we will see that such a difference is crucial for determining the attacker's ability when designing attacks.

\subsubsection{RW as a similarity measurement}
The core of unsupervised anomaly detection is to identify data points that are significantly different from the rest of the population. RW has been shown to be an effective method for measuring the similarities of nodes in a graph. Specifically, given a graph $G=(V,E)$ with its adjacency matrix denoted as $W$, we define the transition matrix $P=(p_{ij})_{|V|\times|V|}$ as the column-normalized version of the adjacency matrix $W$, where $p_{ij}=w_{ij}/\sum_{t=1}^{|V|} w_{i,t}$. If vertex $i$ has no outgoing edges (i.e., $\sum_{t=1}^{k+n} w_{i,t}=0$), we set the transition probability to 0. The widely used Page-Rank algorithm with restart can be represented as follows:
\begin{equation}
\label{eqn:RW_general}
    \vec{s}= (1-\alpha) P \vec{s}+ \alpha \vec{r},
\end{equation}

where $\alpha$ is the restart rate, a hyper-parameter that controls the probability of restart; the vector $\vec{r}$ specifies the restart strategy, and $\vec{s}$ characterizes the node similarities. With the similarity, the anomaly score of a node is calculated as the opposite of its average similarity to all other nodes, or the average similarity among its neighbors. Next, we present two representative models to instantiate the \textit{Di-RWAD} and \textit{InDi-RWAD} systems.



\subsection{Representative Target Models}


\subsubsection{Di-RWAD}
\label{target1} 

We consider bipartite graphs as a representative example of directly accessible graphs. We next describe how to apply the RWAD algorithm to the bipartite graphs of this kind, which we term as \textit{BiGraphRW} model.

To begin, we define a bipartite graph $G=(U\cup V,E)$ as a graph with two disjoint sets of vertices $U=\{u_i|1\leq i\leq k\}$ and $V=\{v_i|1\leq i\leq n\}$, and a set of edges $E\subseteq U\times V$ that connect the vertices in $U$ to the vertices in $V$. We represent the edges in $E$ as a binary edge matrix $M=(m_{ij}){k\times n}$, where $m_{ij}=1$ if $\langle i,j\rangle\in E$, and $m_{ij}=0$ otherwise. Then, the adjacency matrix for a bipartite graph can be constructed as $W=(w_{ij})_{(k+n)\times(k+n)}=\begin{pmatrix}0 & M \\M^{T} & 0\end{pmatrix}$.  

For each node $u\in U$, \textit{BiGraphRW} applies Eqn.~\ref{eqn:RW_general} with $\vec{r}=\vec{e}_{u}$, where $\vec{e}_{u}$ is a vector with zeros element except node $u$, which means that it always restarts from node $u$. The resulting vector $\vec{s}_{u}=(1-\alpha) P \vec{s}_u+ \alpha \vec{e}_{u}$ represents the connectivity scores of node pairs $\{ \left \langle u,t\right \rangle| t \in U\cup V$\}, which quantifies the similarity between node $u$ and others. By assumption, a node $v$ tends to have a lower mean similarity score among its neighbors if it is anomalous. We denote the average neighbor similarity as $\bar{S}_{v}$:
\begin{equation}
    \bar{S}_{v}=\frac{ \sum_{i=1}^k M_{iv} \sum_{j=1, i \neq j}^k M_{jv}\vec s_{u_{i}}(u_{j}) }{\sum_{i=1}^k M_{iv} \sum_{j=1,i \neq j}^k M_{jv}},
\end{equation}
where $\vec s_{u_{i}}(u_{j})$ represent the element corresponding to node $u_{j}$ in $\vec{s_{u_{i}}}$, which is the similarity between node $u_{i}$ and $u_{j}$. Anomaly score of node $v$ is in contract to the mean similarity score $\bar{S}_{v}$, so we denoted it by 
\begin{equation}
\label{eqn:AS_bp}
    \mathcal{A}(v)=1-\bar{S}_{v}= 
    \begin{cases} 
        \text{anomaly,}  & \text{if }\mathcal{A}(v)  \geq \theta,\\
        \text{normal node,} & \text{if }\mathcal{A}(v) < \theta,
    \end{cases}
\end{equation}
where the parameter $\theta$ is a given and fixed threshold of the anomaly detection model.

\subsubsection{InDi-RWAD}
\label{target2}
\par 
A representative way to apply RWAD to non-graph data is by constructing a proximity graph. We call this variant as \textit{ProxGraphRW} model. In this approach, the input feature data is represented as $\mathbf{X}=\{\mathbf{x}_1, \mathbf{x}_2, \cdots,\mathbf{x}_n\}$, $\mathbf{x}_{i} \in \mathbb{R}^{d}$.
The first step is to construct a proximity graph according to the similarity or distance measurement between each pair of samples. To construct a proximity graph $G=\left( V,E\right)$, the vertices $V$ represent data samples $\mathbf{x}_1, \mathbf{x}_2, \cdots ,\mathbf{x}_n$, and the edges imply the similarity among vertices. This can be achieved through similarity measures, such as Euclidean distance, cosine similarity, or correlation coefficient. We denote the similarity function between $\mathbf{x}_i$ and $\mathbf{x}_j$ as $sim(\mathbf{x}_i,\mathbf{x}_j)$. Then, proximity graphs can be constructed by different rules. 
In this paper, we take $\epsilon$-Graph~\cite{moonesinghe2006outlier,yao2012anomaly} as an example, where for every data sample $\mathbf{x}_{i}$, an edge is connected to $\mathbf{x}_{j}$ if $sim\left(\mathbf{x}_{i},\mathbf{x}_{j}\right) > \epsilon$. 
We define the weighted adjacency matrix as $W=(w_{ij})_{n\times n}$, where $w_{ij}=sim(\mathbf{x}_i,\mathbf{x}_j)\cdot\mathbb{I}(sim(\mathbf{x}_i,\mathbf{x}_j)>\epsilon)$, and $\mathbb{I}(\cdot)$ is an indicator function. With the proximity graph constructed, \textit{ProxGraphRW} applies the Eqn.~\ref{eqn:RW_general} with $\vec{r}=\frac{1}{n}$, which means that the RW restart from any node with equal probability. The resulting vector $\vec{s}= (1-\alpha) P \vec{s}+\frac{\alpha}{n}$ contains the connectivity scores of all nodes, where each element $\vec{s}(v)$ quantifies the overall similarity of node $v$ to all other nodes. Finally, based on the hypothesis that anomalies have low connectivity to most others, 
the anomaly score of node $v$ is
\begin{equation}
\label{eqn:AS_px}
    \mathcal{A}(v) = 1- \vec{s}(v),
\end{equation}
where $\vec{s}(v)$ is the element corresponding to node $v$ in $\vec{s}$.

\section{Problem Statements}
\label{ProblemSate}
In this section, we introduce the adversarial environment that random-walk-based anomaly detection (RWAD) operates in, and then formally define the attack problem.

\subsection{System and Threat Model}
We consider a system consisting of two parties: an analyst who runs an RWAD algorithm to detect potential anomalies and an attacker who aims to evade the detection. In practice, the analyst would first collect data from the environment and construct a graph, which is fed into the RWAD system for anomaly detection. However, the attacker could tamper with the data collection process which will result in a poisoned graph, leading to the malfunction of the system. For instance, in online shopping platforms, the attacker may manipulate some users to provide fake ratings for target items. The resulting poisoned data can lead to biased recommendations from the recommender system.

We further introduce the threat model by specifying the attacker’s knowledge, goal, and capability. By Kerckhoffs’s principle, we assume a worst-case scenario where the attacker knows all the data as well as the anomaly detection model, which is a common assumption employed by many previous attacks~\cite{zhu2022binarizedattack,anelli2021adversarial}. 
We assume that the attacker has a set of target nodes in mind. Initially, the target nodes would have been determined as abnormal by the RWAD system if no data was manipulated. The attacker then tries to decrease the anomaly scores of those target nodes in the hope that they would evade the detection. To this end, the attacker can manipulate the data constrained by a certain budget. Specifically, depending on whether the graph is directly accessible or not, we divide the attacks into two types:
\begin{itemize}
    \item \textit{Graph-space} attack: the attacker can directly modify the structure of the graph by adding and deleting the edges under a budget constraint $K$.
    \item \textit{Feature-space} attack: the attacker can only modify the features of a set of attack nodes, which will indirectly cause changes in the graph structure. Considering a practical scenario that the targeted anomaly nodes are crafted to have specific malicious functions, we can not modify their features arbitrarily. Therefore, an indirect feature attack, aiming to decrease the anomaly scores of target nodes while keeping their features unchanged, is ideal for such a problem. Hence, we restrict the selection of attack nodes to those other than the target nodes.
\end{itemize}

\subsection{Problem definition}
To facilitate our theoretical analysis, we formally define the attacks against RWAD as follows.

\begin{definition}[\textbf{PA-RWAD}: poisoning attacks against RWAD]
\label{def:decision-problem}
An instance of the problem is defined by a tuple, $(G,\target,\anom,\threshold,\budget,\FA,\FR)$, where $G=(V,E)$ is a network, $\target \subseteq V$ is the set of targets, $\anom: \mathbb{G} \times V \rightarrow \mathbb{R}$ is the anomaly score function, $\threshold \in \mathbb{N}$ is the safety threshold, $\budget \in \mathbb{N}$ is the budget specifying the maximum number of edges that can be added or removed, $\FA \subseteq (V \times V) \setminus E$ is the set of edges that can be added, and $\FR \subseteq E$ is the set of edges that can be removed. The goal is then to identify two sets, $A^* \subseteq \FA$ and $R^* \subseteq \FR$, such that $|A^*|+|R^*| \leq K$, and for $G^*=(V, (E \cup A^*) \setminus R^*)$ we have:
\[
\left|\left\{ v_i \in V : \forall_{v_j \in \target} \anom(G^*,v_i) > \anom(G^*,v_j)  \right\}\right| \geq \threshold.
\]
\end{definition}
In practice, the top-$\threshold$ nodes ranked by their anomaly scores in descending order are determined as anomalous.
Then, the goal of \textbf{PA-RWAD} is to find a way of modifying the network by adding and removing edges, so that there are at least $\threshold$ nodes with anomaly scores greater than any of the target nodes. In other words, the target nodes are considered as benign.

We note that although \textbf{PA-RWAD} emphasizes modifying the structure of the graph, a feature-space attack is still an instance of \textbf{PA-RWAD}, since the modification of features will ultimately lead to the changes of the graph.


\section{Complexity Analysis}
\label{Complexity}


We now proceed to analyze the computational complexity of the attacks against RWAD. We summarize the hardness results in Tab.~\ref{table-hard}.
\begin{table}[!h]
\caption{Hardness results of \textbf{PA-RWAD}.}
\label{table-hard}
\centering
\begin{tabular}{|c|c|c|}
\hline
                 & Directed graph                                                          & Undirected graph                                             \\ \hline
\textbf{PA-RWAD} & \begin{tabular}[c]{@{}c@{}}NP-hard\\ (\textbf{Lemma 1} \& \textbf{Theom. 1})\end{tabular} & \begin{tabular}[c]{@{}c@{}}NP-hard\\ (\textbf{Theom. 2})\end{tabular} \\ \hline
\end{tabular}
\end{table}


\begin{theorem}
\label{thrm:decision-directed}
The \textbf{PA-RWAD}  problem is NP-hard given a directed graph.
\end{theorem}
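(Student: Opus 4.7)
The plan is to establish NP-hardness by a polynomial-time reduction from a classical coverage-style NP-hard problem; the Set Cover problem is the most natural candidate, since an instance of \textbf{PA-RWAD} already asks for a budget-bounded selection of edge modifications that must simultaneously ``cover'' every target node (drive its anomaly score below that of $\threshold$ other nodes). Given a Set Cover instance $(U, \mathcal{F} = \{S_1,\dots,S_m\}, k)$, I would build a directed graph $G$ together with an instance $(G, \target, \anom, \threshold, \budget = k, \FA, \FR = \emptyset)$ so that a $k$-cover of $U$ exists if and only if the constructed \textbf{PA-RWAD} instance is feasible.

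The construction splits into three gadget layers: (i) a \emph{target vertex} $v_u$ per element $u \in U$, locally tuned so that its baseline anomaly score is high; (ii) a \emph{switch vertex} $w_j$ per set $S_j$, carrying a single candidate add-arc $e_j \in \FA$ whose installation redirects random-walk probability mass away from exactly those targets $v_u$ with $u \in S_j$; and (iii) a calibrated pool of \emph{witness vertices} whose baseline scores lie strictly between the post-attack and pre-attack target scores, providing the $\threshold$-fold outranking required by Definition~\ref{def:decision-problem}. The directed setting is crucial here: a one-way arc from a switch vertex into a target diverts stationary mass without opening a return channel, so the per-arc effect on a target's score can be read off independently of the other candidate arcs. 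This ``local, predictable effect of a single arc addition'' is what I expect Lemma~1 (invoked alongside Theorem~1 in the hardness table) to encode, and I would prove it directly from the closed form $\vec s = \alpha\,(I - (1-\alpha) P)^{-1} \vec r$ via a rank-one (Sherman--Morrison) update, or by coupling two walks that differ in a single arc.

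The main obstacle is the global nature of PageRank: every arc modification nominally perturbs \emph{every} score, and column-normalisation at the tail of a newly installed arc reshuffles its existing out-mass among all outgoing arcs. I would neutralise this by (a) giving each switch vertex only one competing out-arc to a dedicated sink, so installing $e_j$ splits its out-flow in a fully predictable ratio; (b) choosing the restart rate $\alpha$ and edge weights so that the off-gadget perturbations induced by any $k$ simultaneous arc additions are uniformly smaller than the score gaps engineered between the target tier and the witness tier; and (c) keeping the number of witness vertices and the numerical separation parameters polynomial in $|U|+|\mathcal{F}|$, so that the reduction runs in polynomial time and the score inequalities it demands are representable in polynomially many bits.

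Correctness then follows in two directions. For soundness, if $\{S_{j_1},\dots,S_{j_k}\}$ covers $U$, then $A^* = \{e_{j_1},\dots,e_{j_k}\}$ drives every target's anomaly score below the witness tier by Lemma~1, so every target is outranked by at least $\threshold$ witness vertices, witnessing a YES-instance. For completeness, given any feasible attack $A^* \subseteq \FA$ with $|A^*| \leq k$, the locality afforded by Lemma~1 forces the index set $J = \{j : e_j \in A^*\}$ to cover $U$: any element $u$ missed by $J$ would leave the score of $v_u$ at its (high) baseline, exceeding the witness tier and contradicting feasibility. Since the construction is clearly polynomial-time, this establishes NP-hardness of \textbf{PA-RWAD} on directed graphs.
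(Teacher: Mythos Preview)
Your high-level strategy---reduce from Set Cover with one target per universe element, one switch gadget per set, $\FR=\emptyset$, and budget $k$ add-arcs---is the paper's. But the execution you sketch has gaps that the paper's construction is specifically engineered to avoid.

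First, the mechanism is muddled. You write that installing $e_j$ ``redirects random-walk probability mass \emph{away from}'' the targets in $S_j$, but to lower $\anom(v_u)=1-\simil(v_u)$ you must push mass \emph{into} $v_u$; later you do say the arc goes ``into a target,'' but a single arc can reach only one target, whereas $S_j$ may contain many elements. The paper resolves both issues by inserting an intermediate set vertex $Q_j$ that already has out-arcs to all of its elements; the candidate arc is $(q_j,Q_j)$ from a predecessor-free helper $q_j$, so a single addition boosts every $u_i\in Q_j$ simultaneously.

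Second, Lemma~1 is not a perturbation or Sherman--Morrison statement. It is an \emph{exact} closed-form computation of $\simil$ for every node in the specific gadget $G_Q\cup A$, from which ``all non-targets outrank all targets iff $A$ covers $U$'' follows by a one-line inequality. The paper buys this exactness by reducing from \emph{3-Set Cover} (so each $Q_j$ has out-degree exactly $3$), padding each $u_i$ with $|Q|-|Q(u_i)|$ dummy predecessors $x_{i,j}$ of out-degree $3$ so that every target has exactly $|Q|$ in-neighbours, and sourcing each add-arc at a node $q_j$ with no in-edges, whose score is pinned at $\alpha/n$ regardless of which other arcs are added. The gadget is effectively a shallow DAG feeding the targets, so there are no ``off-gadget perturbations'' to bound at all, and no separate witness tier is needed: the paper simply sets $\threshold=n-|U|$ and checks that \emph{every} non-target outranks every target. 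Your perturbation-and-gap route could in principle be made rigorous, but you have not shown how to control the constants under column renormalisation of $k$ simultaneous insertions, and the paper's exact-value route sidesteps the difficulty entirely.
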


\begin{proof}
We will prove that the problem is NP-hard by showing a reduction from the NP-complete \textit{3-Set Cover} problem. An instance of this problem is defined by a collection of subsets  $Q=\{Q_1, \ldots, Q_{|Q|}\}$ of the universe $U=\{u_1, \ldots, u_{|U|}\}=\bigcup_{Q_i \in Q}Q_i$ such that $\forall_i |Q_i|=3$, and a number $k \in \mathbb{N}$. The goal is to determine whether there exist at most $k$ elements of $Q$ that cover the entire universe, i.e., $Q^* \subseteq Q$ such that $|Q^*| \leq k$ and $U = \bigcup_{Q_i \in Q^*}Q_i$.

Let $(Q,k)$ be a given instance of the 3-Set Cover problem. We will now construct an instance of the \textbf{PA-RWAD} problem. In what follows, let $Q(u_i)$ be the subsets in $Q$ that contain $u_i$, i.e.,  $Q(u_i)=\{Q_j \in Q : u_i \in Q_j\}$. Let us also assume that $|Q| \geq 4$, as all smaller instances can be easily solved in constant time. First, we construct a directed network $G_Q=(V,E)$, where:
\begin{itemize}
\item $V = U \cup \bigcup_{Q_i \in Q} \{Q_i,q_i,o_i\} \cup \{h_1,h_2,h_3\} \cup \bigcup_{u_i \in U} \bigcup_{j=1}^{|Q|-|Q(u_i)|} \{x_{i,j},y_{i,j},z_{i,j}\},$
\vspace{3pt}
\item $E = \bigcup_{u_i \in Q_j} \{(Q_j,u_i)\} \cup \bigcup_{o_i \in V} \bigcup_{h_j \in V} \{(o_i,h_j)\} \cup \bigcup_{x_{i,j} \in V} \{(x_{i,j},u_i),(x_{i,j},y_{i,j}),(x_{i,j},z_{i,j})\}.$
\end{itemize}
An example of this construction (e.g., $|U|=5$, $|Q|=3$) is presented in Fig.~\ref{fig:decision-directed}. 
\begin{figure}
\centering
\includegraphics[width=.8\linewidth]{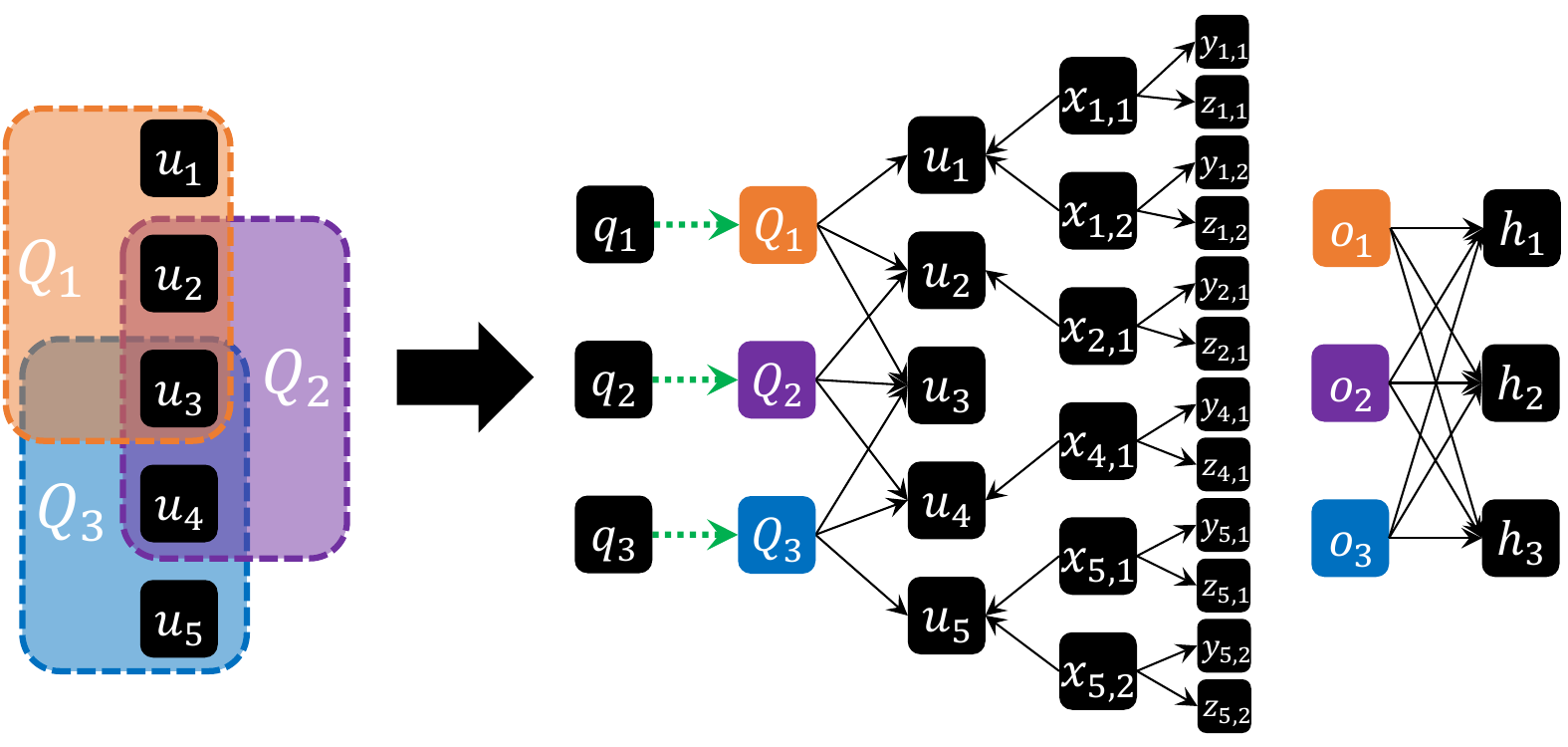}
\caption{
An example of the construction used in the proof of Theorem~\ref{thrm:decision-directed}. The green dotted arrows represent edges that can be added. 
}
\label{fig:decision-directed}
\end{figure}
Now, consider the instance $(G_Q,\target,\anom,\threshold,\budget,\FA,\FR)$ of the \textbf{PA-RWAD} problem, where:
\begin{itemize}
\item $G_Q$ is the network we just constructed,
\item $\target = U$ is the target set,
\item $\anom$ is the anomaly score function with the restart rate parameter $\alpha=\frac{1}{|Q|}$,
\item $\threshold=n-|U|$ is the safety threshold,
\item $\budget=k$ is the budget,
\item $\FA = \bigcup_{Q_i \in Q} \{(q_i,Q_i)\}$, i.e., only edges from $q_i$ to corresponding $Q_i$ can be added,
\item $\FR = \emptyset$, i.e., none of the edges can be removed.
\end{itemize}

Since $\FR = \emptyset$, for any solution to the constructed instance of the \textbf{PA-RWAD} problem, we must have $R^* = \emptyset$. Hence, we will omit the mentions of $R^*$ in the remainder of the proof, and we will assume that a solution consists just of $A^*$. We next prove a useful lemma.

\begin{lemma}
\label{lem:directed-construction}
Let $A \subseteq \bigcup_{Q_i \in Q} \{(q_i,Q_i)\}$, and let $G_Q \cup A = (V,E \cup A)$. We have that:
\[
\forall_{u_i \in U} \forall_{v \notin U} \anom(G_Q \cup A,v) > \anom(G_Q \cup A,u_i)
\]
if and only if $\forall_{u_i \in U} \exists_{(q_j,Q_j) \in A} u_i \in Q_j.$
\end{lemma}

\begin{proof}
From the formula of the anomaly score function, we have that $\anom(G_Q \cup A,v_i) = 1 - \simil(G_Q \cup A,v_i)$, where:
\[
\simil(G_Q \cup A,v_i) = \frac{\alpha}{n} + (1-\alpha) \textstyle\sum_{v_j \in V} \simil(G_Q \cup A,v_j) P_{j,i}.
\]
Therefore, we have that $\anom(G_Q \cup A,v_i) > \anom(G_Q \cup A,v_j)$ if and only if $\simil(G_Q \cup A,v_i) < \simil(G_Q \cup A,v_j)$. Let $A(u_i)$ be the set of $Q_j$ containing $u_i$ that got connected to the corresponding node $q_j$ via the edges in $A$, i.e., $A(u_i) = \{Q_j \in Q: u_i \in Q_j \land (q_j,Q_j) \in A \}$. We now compute the values of $\simil(G_Q \cup A,v_i)$ for all nodes in $V$:
\begin{itemize}
\item $\simil(G_Q \cup A,q_i) = \simil(G_Q \cup A,x_{i,j}) = \simil(G_Q \cup A,o_i) = \frac{\alpha}{n} = \frac{1}{|Q|n}$, as nodes $q_i$, $x_{i,j}$, and $o_i$ do not have any predecessors,
\item $\simil(G_Q \cup A,y_{i,j}) = \simil(G_Q \cup A,z_{i,j}) = \frac{\alpha}{n} + (1-\alpha) \simil(G_Q \cup A,x_{i,j}) \frac{1}{3} = \frac{\alpha}{n} + \frac{(1-\alpha)\alpha}{3n} = \frac{(4-\alpha)\alpha}{3n} = \frac{\left(4-\frac{1}{|Q|}\right)}{3|Q|n}$, as the only predecessor of nodes $y_{i,j}$ and $z_{i,j}$ is the node $x_{i,j}$ with out-degree $3$,
\item $\simil(G_Q \cup A,h_i) = \frac{\alpha}{n} + (1-\alpha) \sum_{o_j \in V} \simil(G_Q \cup A,o_j) \frac{1}{3} = \frac{\alpha}{n} + \frac{|Q|(1-\alpha)\alpha}{3n} = \frac{((1-\alpha)|Q|+3)\alpha}{3n} = \frac{|Q|+2}{3|Q|n}$, as the predecessors of $h_i$ are all $|Q|$ nodes $o_j$, each with out-degree $3$,
\item if $(q_i,Q_i) \notin A$ then $\simil(G_Q \cup A,Q_i) = \frac{\alpha}{n} = \frac{1}{|Q|n}$, as such node $Q_i$ has no predecessors,
\item if $(q_i,Q_i) \in A$ then $\simil(G_Q \cup A,Q_i) = \frac{\alpha}{n} + (1-\alpha)\simil(G_Q \cup A,q_i) = \frac{\alpha}{n} + (1-\alpha)\frac{\alpha}{n} = \frac{2|Q|-1}{|Q|^2 n}$, as the only predecessor of such node $Q_i$ is the node $q_i$,
\item $\simil(G_Q \cup A,u_i) = \frac{\alpha}{n} + (1-\alpha) \sum_{Q_j \in Q(u_i)} \simil(G,Q_j) \frac{1}{3} + (1-\alpha) \sum_{x_{i,j} \in V} \simil(G,x_{i,j}) \frac{1}{3} = \frac{\alpha}{n} + \frac{|Q|(1-\alpha)\alpha}{3n} + |A(u_i)|\frac{(1-\alpha)^2\alpha}{3n} = \frac{((1-\alpha)|Q|+3+|A(u_i)|(1-\alpha)^2)\alpha}{3n} = \frac{|Q|+2+|A(u_i)|\left(1-\frac{1}{|Q|}\right)^2}{3|Q|n}$, as the predecessors of $u_i$ are $|Q(u_i)|$ nodes $Q_j$, as well as $|Q|-|Q(u_i)|$ nodes $x_{i,j}$, each with out-degree $3$.
\end{itemize}

We now prove the main equivalence of the lemma. Assume that $\forall_{u_i \in U} \forall_{v \notin U} \anom(G_Q \cup A,v) > \anom(G_Q \cup A,u_i)$. In particular, it implies that: $
\forall_{u_i \in U}\, \simil(G_Q \cup A,u_i) - \simil(G_Q \cup A,h_1) > 0.
$
By substituting the values in the inequality, we get:
\[
\forall_{u_i \in U}\, \frac{|A(u_i)|\left(1-\frac{1}{|Q|}\right)^2}{3|Q|n} > 0,
\]
which in turn implies that $\forall_{u_i \in U} |A(u_i)| > 0$. Hence, we have that for every $u_i \in U$ there exists at least one $Q_j$ such that $u_i \in Q_j$ and $(q_j,Q_j) \in A$.

To prove the implication in the other direction, assume that $\forall_{u_i \in U}\, \exists_{(q_j,Q_j) \in A} u_i \in Q_j$. Hence, we get that $\forall_{u_i \in U} |A(u_i)| > 0$, which implies that: $
\forall_{u_i \in U}\, \simil(G_Q \cup A,u_i) \geq \frac{|Q|+2+\left(1-\frac{1}{|Q|}\right)^2}{3|Q|n}.
$
By comparing this value to the values computed above, we have that $\forall_{u_i \in U}, \forall_{v \notin U}$:
\[
 \simil(G_Q \cup A,v) < \frac{|Q|+2+\left(1-\frac{1}{|Q|}\right)^2}{3|Q|n} \leq \simil(G_Q \cup A,u_i),
\]
which in turn implies that:
\[
\forall_{u_i \in U} \forall_{v \notin U}\, \anom(G_Q \cup A,v) > \anom(G_Q \cup A,u_i).
\]
This concludes the proof of the lemma.
\end{proof}

Let $Q^* \subseteq Q$ be a solution to the given instance of the 3-Set Cover problem, i.e., $|Q^*| \leq k$ and $\forall_{u_i \in U} \exists_{Q_j \in Q^*} u_i \in Q_j$. From Lemma~\ref{lem:directed-construction} we have that $\anom(G_Q \cup A^*,v) > \anom(G_Q \cup A^*,u_i)$ where $A^*=\{(q_i,Q_i) : Q_i \in Q^*\}$. Hence, in network $G_Q \cup A^*$ all $\threshold=n-|U|$ nodes other than the nodes in $U$ have greater anomaly scores than all the nodes in $U$, and $|A^*| \leq k=\budget$. Therefore, $A^*$ is a solution to the constructed instance of the \textbf{PA-RWAD} problem.

To prove the implication in the other direction, assume that $A^*$ is a solution to the constructed instance of the \textbf{PA-RWAD} problem. In particular, it implies that $|A^*| \leq \budget = k$ and $\forall_{u_i \in U} \forall_{v \notin U} \anom(G_Q \cup A,v) > \anom(G_Q \cup A,u_i)$. From Lemma~\ref{lem:directed-construction} we have that $\forall_{u_i \in U} \exists_{(q_j,Q_j) \in A} u_i \in Q_j$. Therefore, $\{Q_i \in Q : (q_i,Q_i) \in A^*\}$ is a solution to the given instance of the 3-Set Cover problem.

We have shown that the constructed instance of the \textbf{PA-RWAD} problem has a solution if and only if the given instance of the 3-Set Cover problem has a solution, which concludes the proof of NP-hardness.
\end{proof}

\begin{theorem}
\label{thrm:decision-undirected}
The \textbf{PA-RWAD}  problem is NP-hard given an undirected graph.
\end{theorem}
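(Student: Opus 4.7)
The plan is to again reduce from 3-Set Cover, but to adapt the directed construction from the proof of Theorem~\ref{thrm:decision-directed} to handle the symmetric nature of undirected edges. A naive ``undirecting'' of $G_Q$ fails because many auxiliary nodes ($q_i$, $o_i$, $x_{i,j}$) had similarity exactly $\alpha/n$ purely by having no predecessors, a property that cannot exist in an undirected graph. I would therefore keep the same high-level skeleton (trigger nodes $q_i$, set nodes $Q_i$, universe nodes $u_i$, auxiliaries $h_j$, $o_i$, $x_{i,j}$, $y_{i,j}$, $z_{i,j}$), the same addable-edge family $\FA = \bigcup_{Q_i \in Q}\{\{q_i, Q_i\}\}$, and the same parameters $\FR = \emptyset$, $\budget = k$, $\target = U$, $\threshold = n-|U|$, so that a set of added edges still corresponds bijectively to a selection of sets from $Q$.

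To restore the asymmetry that made the directed reduction work, I would attach degree-inflating gadgets (e.g., large bundles of pendant leaves) to chosen nodes so that the random walk leaving them along the ``backflow'' edges carries only a polynomially small fraction of their stationary mass. In particular, padding each $q_i$ with many pendants forces $\{q_i, Q_i\}$ to carry an $O(1/\mathrm{poly}(n))$ share of $q_i$'s transitions, so adding the edge still primarily pushes mass from $q_i$ into $Q_i$ and onward to the covered $u_\ell \in Q_i$, while the symmetric backward effect is dwarfed. Analogous padding on $o_i$, $h_j$, $x_{i,j}$, and $Q_i$ keeps the similarity scores of all non-$U$ nodes tightly concentrated in a known range, both before and after any edge additions. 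With this padded construction in hand, I would prove the undirected analogue of Lemma~\ref{lem:directed-construction}: $\anom(G_Q \cup A, v) > \anom(G_Q \cup A, u_i)$ for every $u_i \in U$ and every $v \notin U$ if and only if $A$ covers $U$, i.e., for each $u_i \in U$ there is some $\{q_j,Q_j\} \in A$ with $u_i \in Q_j$. The rest of the reduction then mirrors the directed case.

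The main obstacle will be the algebraic bookkeeping in this analogue lemma. I need to show that the increment in $\simil(u_i)$ caused by covering $u_i$ with a newly added edge strictly exceeds the total perturbation those same added edges induce on $\simil(v)$ for every $v \notin U$, uniformly in $|Q|$ and $|U|$. Because undirected PageRank spreads perturbations globally (and the unrestarted stationary distribution is proportional to degree), I expect to need the padding size as an explicit polynomial in $|Q|$, and to bound the backflow via a Neumann-series expansion of $(I - (1-\alpha)P)^{-1}$ or an equivalent direct perturbation argument on PageRank under small edge additions. Keeping these bounds uniform over instance size while ensuring the construction remains polynomial-time is the technically delicate step.
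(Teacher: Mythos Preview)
Your proposal takes a genuinely different route from the paper's proof, and the contrast is instructive.

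The paper does \emph{not} adapt the directed 3-Set Cover construction. Instead it reduces from Finding $k$-Clique and exploits a structural shortcut that eliminates all of the perturbation analysis you are worried about: it sets the restart rate to $\alpha=0$. For undirected networks with $\alpha=0$ the stationary vector satisfies $\simil(G,v)\propto d(G,v)$, so $\anom(G,v_i)>\anom(G,v_j)$ becomes equivalent to $d(G,v_i)<d(G,v_j)$. The reduction then builds a graph in which a single target $t$ has degree $n'$, each $v'_i\in V'$ has degree $n'+k-2$, and only edges of $E'$ may be \emph{removed}. Pushing $k$ of the $v'_i$ strictly below $t$ requires dropping $k-1$ incident $E'$-edges from each of them; with budget $\binom{k}{2}$ this is possible iff those $k$ nodes form a clique in $G'$. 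No PageRank algebra, no Neumann series, no padding---just degree counting.

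Your Set-Cover-plus-pendants plan is plausible in outline, but the part you yourself flag as ``the technically delicate step'' is exactly where the work lives, and it is not carried out. Adding $\{q_j,Q_j\}$ in an undirected graph perturbs every entry of $(I-(1-\alpha)P)^{-1}$; you must show that the boost to $\simil(u_i)$ from being covered uniformly dominates the induced drift on \emph{every} non-target node, including the (polynomially many) pendant leaves you introduced, across all choices of $A\subseteq\FA$ with $|A|\le k$. This may well be doable with sufficiently aggressive padding and a careful first-order expansion, but you have not supplied the bounds, and the interaction between padding size, $\alpha$, and the $1/n$ restart term is not obviously benign. The paper's $\alpha=0$ trick sidesteps this entire difficulty by collapsing the anomaly ordering to a purely combinatorial (degree) ordering, which is why its reduction is a page long rather than several.
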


\begin{proof}
We will prove that the problem is NP-hard by showing a reduction from the NP-complete \textit{Finding $k$-Clique} problem. An instance of this problem is defined by a network $G'=(V',E')$, and a number $k \in \mathbb{N}$. The goal is to determine whether there exist $k$ nodes that induce a clique in $G'$.

Let $(G',k)$ be a given instance of the Finding $k$-Clique problem. We will now construct an instance of the \textbf{PA-RWAD} problem. Let $n'=|V'|$, and let $d(G,v)$ be the degree of $v$ in network $G$, i.e., $d(G,v)=|\{w \in V : (v,w) \in E\}|$. First, we construct a undirected network $G=(V,E)$, where:
\begin{itemize}
\item $V = V' \cup \{t\} \cup \bigcup_{v'_i \in V'} \bigcup_{j=1}^{n'+k-d(G',v')-3} \{x_{i,j}\},$
\item $E = E' \cup \bigcup_{v'_i \in V'} \{(t,v'_i)\} \cup \bigcup_{x_{i,j} \in V} \{(v'_i,x_{i,j})\}.$
\end{itemize}
An example of this (e.g., $|V'|=4,k=3$) construction is presented in Fig.~\ref{fig:decision-undirected}.
\begin{figure}
\centering
\includegraphics[width=.6\linewidth]{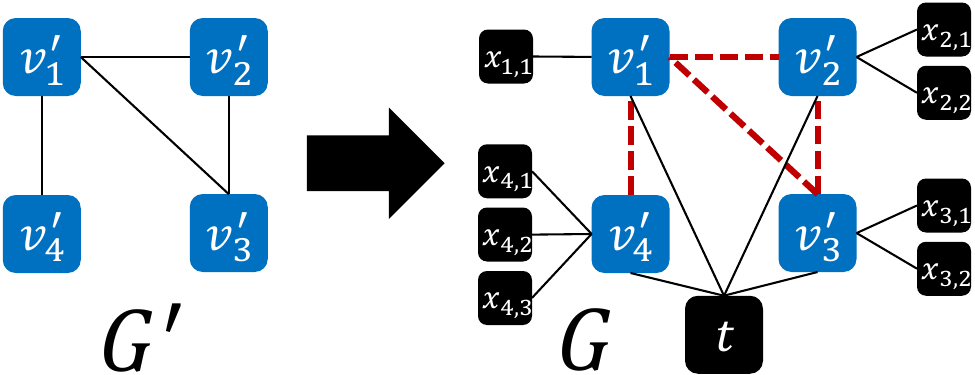}
\caption{
An example of the construction used in the proof of Theorem~\ref{thrm:decision-undirected}. The red dashed lines represent edges that can be removed.
}
\label{fig:decision-undirected}
\end{figure}
Now, consider the instance $(G,\target,\anom,\threshold,\budget,\FA,\FR)$ of the \textbf{PA-RWAD} problem, where:
\begin{itemize}
\item $G$ is the network we just constructed,
\item $\target = \{t\}$ is the target set,
\item $\anom$ is the anomaly score function with the restart rate parameter $\alpha=0$,
\item $\threshold=n-(n'-k+1)$ is the safety threshold,
\item $\budget=\frac{k(k-1)}{2}$ is the budget,
\item $\FA = \emptyset$, i.e., none of the edges can be added,
\item $\FR = E'$, i.e., only edges existing in $G'$ can be removed from $G$.
\end{itemize}

Since $\FA = \emptyset$, for any solution to the constructed instance of the \textbf{PA-RWAD} problem, we must have $A^* = \emptyset$. Hence, we will omit the mentions of $A^*$ in the remainder of the proof, and we will assume that a solution consists just of $R^*$.

From the formula of the anomaly score function with $\alpha = 0$ we have that $\anom(G,v_i) = 1 - \simil(G, v_i)$, where:
\[
\simil(G,v_i) = \textstyle\sum_{v_j \in V} \simil(G,v_j) P_{j,i}.
\]
Therefore, we have that $\anom(G,v_i) > \anom(G,v_j)$ if and only if $\simil(G,v_i) < \simil(G,v_j)$.

Moreover, from Perra and Fortunato~\cite{perra2008spectral}, we have that for the stationary distribution $\simil$ of this form (i.e., for $\alpha = 0$) in an undirected network $G$ we have that $\simil(G,v_i) \sim d(G,v_i)$, i.e., the value of the entry in $\simil$ for a given node is proportional to its degree. Therefore, we have that $\anom(G,v_i) > \anom(G,v_j)$ if and only if $d(G,v_i) < d(G,v_j)$.
Let us now compute the values of $d(G,v_i)$ for all nodes in $G$:
\begin{itemize}
\item $d(G,t) = n'$, as the node $t$ is connected with all $n'$ nodes $v'_i$,
\item $d(G,x_{i,j}) = 1 < d(G,t)$, as each node $x_{i,j}$ is only connected with the node $v'_i$,
\item $d(G,v'_i) = 1 + d(G',v'_i) + n'+k-d(G',v'_i)-3 = n'+k-2 \geq d(G,t)$, as each node $v'_i$ is connected with the node $t$, $d(G',v'_i)$ nodes from $V'$, as well as $n'+k-d(G',v'_i)-3$ nodes $x_{i,j}$.
\end{itemize}

Since $\threshold=n-(n'-k+1)$, all nodes $x_{i,j}$ have a smaller degree than $t$, and the total number of $x_{i,j}$ is $n-n'-1$, we need at least $k$ out of $n'$ nodes in $V'$ to have a smaller degree than $t$ in order for the safety threshold to be satisfied. However, they all have equal or greater degrees than $t$. Hence, the safety threshold is not satisfied in $G$.

Since the removal of edges from $\FR$ can only change the degrees of nodes in $V'$, we need to decrease the degree of $k$ of these nodes to a value smaller than that of $t$. For each of these $k$ nodes we have to remove at least $\Delta$ edges incident with it, where:
\[
\Delta = d(G',t) - d(G',v'_i) + 1 = n'+k-2 - n' + 1 = k-1.
\]

Let $V^* \subseteq V'$ be a solution to the given instance of the Finding $k$-Clique problem, i.e., a set of $k$ nodes forming a clique in $G'$. Since $\FR=E'$ and the degree of each node in $k$-clique is $k-1$, we have that $V^* \times V^* \subseteq \FR$, and removing $V^* \times V^*$ from $G$ decreases the degree of $k$ nodes from $V'$ by $\Delta=k-1$ each. Therefore, $V^* \times V^*$ is a solution to the constructed instance of the \textbf{PA-RWAD} problem.

To prove the implication in the other direction, assume that $R^*$ is a solution to the constructed instance of the \textbf{PA-RWAD} problem. At least $\frac{k\Delta}{2}=\frac{k(k-1)}{2}$ of the removed edges have to be incident with the $k$ nodes from $V'$ contributing to the safety threshold. However, since the total budget is $\budget=\frac{k(k-1)}{2}$, all of the removed edges have to be incident with the $k$ nodes from $V'$ contributing to the safety threshold, and $\frac{k(k-1)}{2}$ edges incident with $k$ nodes constitute a clique. Since we have that $\FR=E'$, the same edges constitute a $k$-clique in $G'$. Therefore, $\bigcup_{(v'_i,v'_j) \in R^*} \{v'_i, v'_j\}$ is a solution to the given instance of the Finding $k$-Clique problem.

We have shown that the constructed instance of the \textbf{PA-RWAD} problem has a solution if and only if the given instance of the Finding $k$-Clique problem has a solution, which concludes the proof of NP-hardness.
\end{proof}

\section{Practical Graph-Space Attacks}
\label{structural_attack}
In this section, we investigate practical attacks in the graph space. We note that the graph-space attack itself is important in the case where the graph is directly accessible. Moreover, as we will show later, the results of graph-space attacks provide insightful guidance for feature-space attacks.

\subsection{Attack Formulation}
We begin by formulating the decision problem \textbf{PA-RWAD} as an optimization problem.
We use $G =(V,E)$ with its corresponding adjacency matrix $W$ to represent the original clean graph. We assume that the anomaly detection system predicts node $v$ as an anomaly if the anomaly score $\mathcal{A}(v)$ is greater than a threshold $\theta$. The attacker aims to decrease the number of nodes in a given target set $\mathcal{T} \subset V$ that are identified as anomalies by modifying at most $K$ edges in the graph. To represent the edge manipulations, we denote the modification by a binary matrix $B=(b_{uv})_{(|V|\times|V|)}$, where the element $b_{uv} \in \{0,1\}$. If $b_{uv}=0$, the edge $\langle u,v \rangle$ remains unchanged, and $b_{uv}=1$ lead to add/delete of edge $\langle u,v \rangle$. Then the attack graph can be represented by $|W-B|$. In this paper, we consider undirected graphs where the adjacency matrix is always symmetric, and the budget constraint can be represented as $\sum_{u>v} b_{uv} \le K$. Then the graph-space attack problem can be formulated as follows: 
\begin{equation}
\label{opt:graphAttack}
    \begin{array}{rlclcl}
        \displaystyle \min\limits_{B} & \multicolumn{3}{l}{\sum_{v\in \mathcal{T}} \mathbb{I}(\mathcal{A}(v)>\theta),} \\
        \textrm{s.t.} & b_{uv}  \in  \{0,1\}, \, \sum_{u>v} b_{uv} \le  K,
    \end{array}
\end{equation}
where $\mathbb{I}(\cdot)$ is a indicator function, $\mathbb{I}(\mathcal{A}(v)>\theta)=1$ if the anomaly scores of node $v$ is greater than $\theta$.

\subsection{Attack Method}

\par To address the non-differentiable issue of the binary values in $B$, we adopt a relaxation strategy by representing $b_{uv}$ in a continuous space that ranges from 0 to 1. This is denoted as $\tilde{B}$, which is subsequently converted back to binary form $\bar{B}$ after solving the optimization problem. 
To handle the discrete objective function in Eqn.~\ref{opt:graphAttack}, we replace it with the sum of anomaly scores among target nodes, $\mathcal{L}_a(\tilde{B})= \sum_{v \in \mathcal{T}} \mathcal{A}(v)$, then we can re-formulate the attack problem as:
\begin{equation}
\label{opt:graphAttack_relaxed}
    \begin{array}{rlclcl}
        \displaystyle \min\limits_{\tilde{B}} & \multicolumn{3}{l}{ \mathcal{L}_a(\tilde{B})= \sum_{v \in \mathcal{T}} \mathcal{A}(v),} \\
        \textrm{s.t.} & \tilde{b}_{uv} \in [0,1],\, \sum_{u>v} \bar{b}_{uv} \le K, \\
    \end{array}
\end{equation}
where $\tilde{B}$ is the relaxed and continuous adjacency matrix, $\bar{B}=(\bar{b}_{ij})$ is the discrete version of $\tilde{B}$.  

To solve the challenging bi-level optimization problem, we propose two strategies: alternative iteration attack (\textbf{alterI}-attack) and closed-form attack (\textbf{cf}-attack). 
In brief, the \textbf{alterI}-attack iterates the inner RW model and the attack optimization alternatively to approximate the bi-level optimization, while the \textbf{cf}-attack transforms the bi-level optimization into a single-level problem. We first introduce the \textbf{alterI}-attack and then highlight the difference in the \textbf{cf}-attack.


\subsubsection{\textbf{alterI}-attack}
The optimization of problem~\eqref{opt:graphAttack_relaxed} remains a challenging task due to the need to reverse the continuous variable $\bar{B}$ to binary $\bar{B}$ while satisfying the budget constraint. To overcome this difficulty, we first use projected gradient descent (PGD) to efficiently optimize $\tilde{B}$ without considering the budget constraint $\sum_{u>v} \bar{b}_{uv} \le K$. Then, we obtain the binary matrix $\bar{B}$ by selecting the top-$K$ elements from $\tilde{B}$. This approach allows us to efficiently approximate the constrained optimization problem while ensuring that the attack budget is satisfied. However, optimizing the relaxed optimization problem is still challenging because the anomaly score $\mathcal{A}(v)$ in the loss function $\mathcal{L}_a(\tilde{B})$ depends on the variable $\tilde{B}$ in a complex way. After updating $\tilde{B}$, obtaining $\mathcal{A}(v)$ requires iterating over Eqn.~\ref{eqn:RW_general} dozens of times to get the converged node similarity vector $\vec{s}$, and the gradient needs to be traced back to each iteration. To address this issue, we only iterate over Eqn.~\ref{eqn:RW_general} once instead of multiple times.
The detailed procedures are summarized in Alg.~\ref{alg:graphAttack} and Fig.~\ref{fig:Attack_Framework} (top). 
Firstly, we update the adjacency matrix with $\tilde{W}=|W-\tilde{B}|$ (line:5), and then we update the similarity score $\mathcal{A}(v)$ based on $\tilde{W}$ for one step using Eqn.~\ref{eqn:RW_general} and then obtain the anomaly score with Eqn.~\ref{eqn:AS_bp} or \ref{eqn:AS_px} (line:6-9). Next, we update attack loss $\mathcal{L}_a (\tilde{B})$ based on $\mathcal{A}(v)$ (line:10), and calculate the projected gradient to optimize $\tilde{B}$ for one step (line: 11-15). Repeating the alternative iteration leads to the convergence of the inner model $\vec{s}$ and also the continuous attack variable $\tilde{B}$. After the iterations, we keep the top-$K$ elements in $\tilde{B}$ to obtain $\bar{B}$ and the others are set to zeros (line:17-18). Finally, the attacked graph is obtained by $\hat{W}=|W-\bar{B}|$ (line:19). This algorithm is also suitable for weighted graphs in which the weights on edges are in $[0,1]$, and the final solution is to modify $K$ edge weights while the other weights remain unchanged. 
\begin{algorithm}[htb]
\caption{Graph-space attack.}
\label{alg:graphAttack}
\begin{algorithmic}[1]
\State \textbf{Input:} Graph with adjacency matrix $W$, attack budget $K$, attack iteration $T$, learning rate $\eta$.
\State \textbf{Output:} Attacked graph with  adjacency matrix $\hat{W}$.
\Function{AlterI-attack}{$W$, $K$, $T$, $\eta$}
\For {$t=1$ to $T$}
    \State Update adjacency matrix: $\tilde{W}=|W-\tilde{B}|$.
    \For{each node $v$ in target set $\mathcal{T}$}
    \State Update similarity scores $\vec{s}$ with Eqn.~\ref{eqn:RW_general}.
    \State Update anomaly score $\mathcal{A}(v)$ based on $\vec{s}$.
    \EndFor
    
    \State Update objective function $\mathcal{L}_a (\tilde{B})$ with $\mathcal{A}(v)$.
    \For {each edge $\tilde{b}_{uv}$ in $\tilde{B}$}
        \State Calculate gradient $g_{uv}=\tilde{b}_{uv}-\eta \frac{\partial \mathcal{L}(\tilde{B}^)}{ \partial \tilde{b}_{uv}}$
        \State Project $g_{uv}$ into $[0,1]$
        \State Update $\tilde{b}_{uv}$ in $\tilde{B}$
    \EndFor 
    \EndFor
\State Choose top-$K$ edges in $\tilde{B}$ to obtain $\bar{B}$:
\State
    \begin{equation}
    \bar b_{uv}= 
    \begin{cases} 
        \tilde{b}_{uv}  & \text{if }\tilde{b}_{uv} \in  top_K(\tilde{B}),\nonumber\\
        0 & \text{otherwise}.
    \end{cases}
    \end{equation}
\State Obtain attacked graph $\hat{W}=|W-\bar{B}|$.
\State \Return $\hat{W}$.
\EndFunction
\end{algorithmic}
\end{algorithm}
\subsubsection{\textbf{cf}-attack}
While the \textbf{alterI}-attack approach is feasible, the one-step update of the inner model is a simple estimation that may not provide accurate attack loss during the iteration. To address this issue and obtain accurate attack loss, we employ the closed-form solution of the inner model to transform the bi-level optimization problem into a single-level problem. 
According to \cite{boldi2007deeper,gasteiger2018combining}, the inner model (Eqn.~\ref{eqn:RW_general}) has closed-form solution as follows:
\begin{equation}
    \label{eqn:RW_closed-form}
    \vec{s} = \alpha (I- (1-\alpha)P)^{-1}\vec{r},
\end{equation}
where $I$ is an identity matrix. With the closed-form solution, we can directly obtain the accurate anomaly scores after the update of $\tilde{B}$. In contrast to the \textbf{alterI}-attack, which iterates the inner model once after updating $\tilde{B}$, our innovative \textbf{cf}-attack approach substitutes the Eqn.~\ref{eqn:RW_general} (line:7) with Eqn.~\ref{eqn:RW_closed-form} to  
obtain the accurate connectivity scores $\vec{s}$ for current $\tilde{B}$, and others remain the same. While \textbf{cf}-attack offers a more accurate formulation than \textbf{alterI}-attack, it comes with the cost of potential time consumption when calculating the matrix inverse, particularly for graphs with a large number of nodes or edges. In contrast, \textbf{alterI}-attack does not encounter such a problem, making it a more efficient option for such scenarios. Both \textbf{cf}-attack and \textbf{alterI}-attack have their own unique advantages. 

\begin{figure}[tbp]
\centering
\includegraphics[width=.9\linewidth]{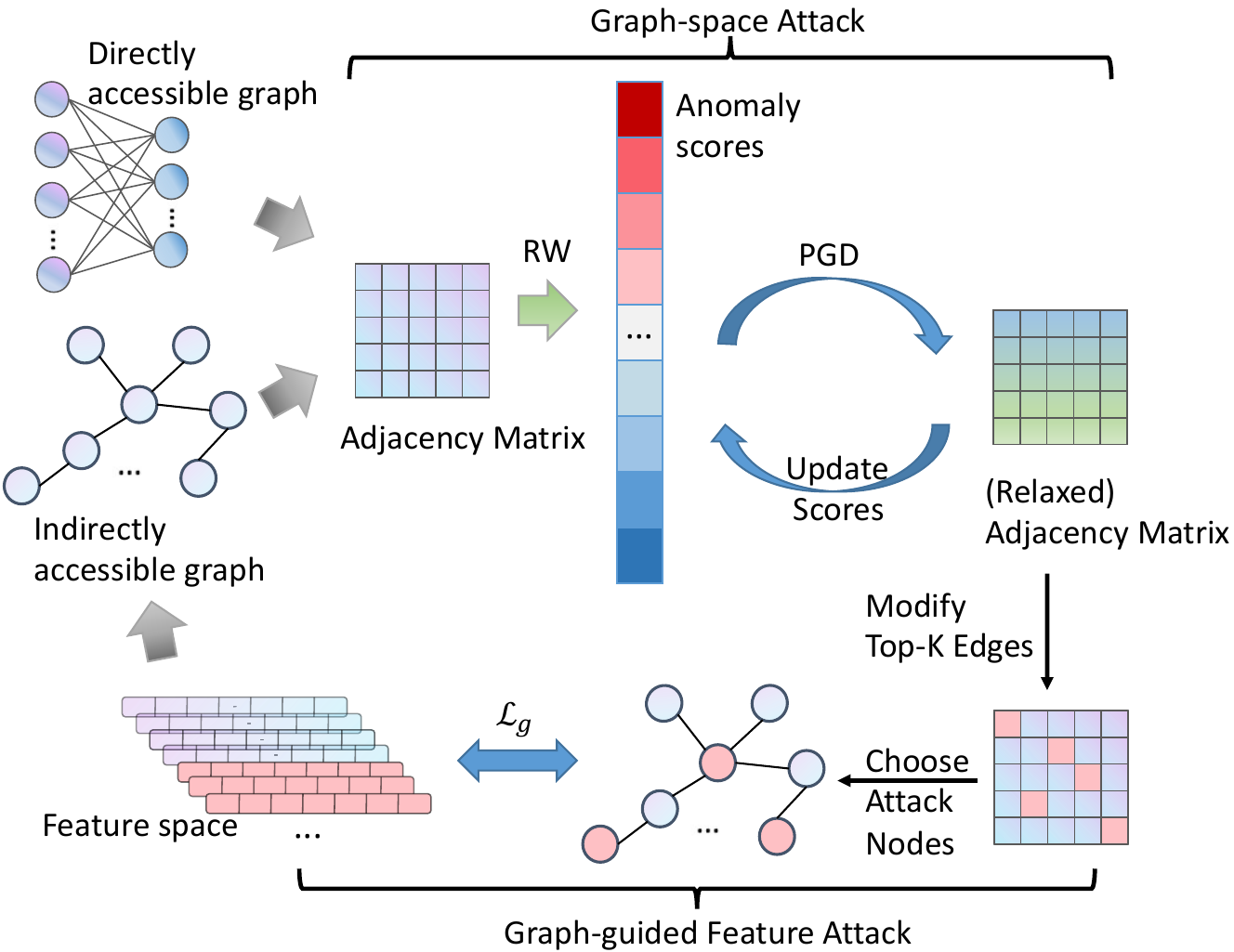}
\caption{
Illustration of proposed attacks.}
\label{fig:Attack_Framework}
\end{figure}

\section{Graph-Guided Feature-Space Attacks}
\label{FeatureAttack}


\subsection{Motivation for Feature-space Attacks}
Previously, we presented effective graph-space attacks against Di-RWAD. However, for InDi-RWAD, where the graphs are not directly accessible, the \emph{realizability} of the attacks becomes a serious concern: the attacker cannot directly modify the edges in a virtual graph space. Instead, in many practical application scenarios, what the attacker can modify are the attributes associated with the entities in their control. 
For example, when it comes to network intrusion detection, each TCP connection represents an entity or node, and attackers can manipulate certain TCP connections by altering attributes such as connection duration, protocol type, and the number of urgent packets. Such manipulations will change the structure of the proximity graph in the \textit{ProxGraphRW} model to become perturbed, which can help shield the targeted anomaly TCP connection from being detected.

Thus, investigating feature-space attacks against InDi-RWAD is of significant practical importance. In particular, we consider the scenario where an attacker can manipulate a set of entities (corresponding to nodes in the constructed proximity graph) and modify their features to assist a group of target nodes in avoiding detection. We explore the connection between graph-space and feature-space attacks and demonstrate how guidance from graph-space attacks can be leveraged to construct effective feature-space attacks.

\subsection{Attack Formulation}
Consider a set of entities with features $\mathbf{X}=\{\mathbf{x}_1, \mathbf{x}_2, \cdots, \mathbf{x}_n\}$, where $\mathbf{x}_i \in \mathcal{X}$ denotes the feature vector associated with entity $i$. As introduced in Section~\ref{target2}, a proximity graph can be constructed from $\mathbf{X}$, where the nodes represent those entities and edges indicate similar node pairs.  An attacker aims to allow a set of target entities (nodes) $\mathcal{T}$ to evade detection. We assume that the attacker has control of a set of \textit{attack nodes} $\mathcal{Z}$ such that the features of the nodes in $\mathcal{Z}$ can be arbitrarily modified in a certain domain $\mathcal{X}$. To limit the attacker's ability, we make the restriction that $\mathcal{Z} \cap \mathcal{T} = \emptyset$ and $|\mathcal{Z}| \leq K'$. For an attack node $i \in \mathcal{Z}$, we denote the modified feature vector as $\hat{\mathbf{x}}_i$. The manipulated feature matrix is $\hat{\mathbf{X}}$.
We note that since the manipulation of the features leads to the change of graph structure, the anomaly score function $\mathcal{A}(v;\hat{\mathbf{X}})$ depends on the features $\hat{\mathbf{X}}$.
Then, we can formulate the feature-space attack as follows: 


\begin{equation}
\label{opt:feature}
    \begin{array}{rlclcl}
        \displaystyle \min\limits_{\hat{\mathbf{x}}_i, i \notin \mathcal{T}} & \multicolumn{3}{l}{\mathcal{L}(\hat{\mathbf{X}})=\sum_{v\in \mathcal{T}} \mathbb{I}(\mathcal{A}(v;\hat{\mathbf{X}})>\theta),} \\
        \textrm{s.t.}
        &  \hat{\mathbf{x}}_v = \mathbf{x}_v, \forall v\in \mathcal{T},\, \hat{\mathbf{x}}_{i} \in \mathcal{X}, \\
        & \mathcal{Z}=\{i|\hat{\mathbf{x}}_{i} \neq \mathbf{x}_{i}\},\, |\mathcal{Z}| \leq K'.
    \end{array}
\end{equation}



\subsection{Two Levels of Guidance from Graph-Space Attacks}
Applying the gradient-descent method to solve problem~\eqref{opt:feature} faces a crucial challenge: while gradient descent can be used to effectively optimize the node features, it is hard to control which nodes are to be manipulated. In other words, it is nontrivial to guarantee the constraint $|\mathcal{Z}| \leq K'$ while preserving optimization performance. We adopt a divide-and-conquer strategy to tackle this problem: we first select up to $K'$ nodes as the attack nodes and then utilize gradient descent to optimize node features.  In particular, we show that the results from graph-space attacks can be innovatively utilized to guide both the selection of attack nodes and feature optimization.

Specifically, given a proximity graph $\mathcal{G}$, we can leverage the attacks in Section~\ref{structural_attack} to produce a poisoned graph $\mathcal{G}'$. Even though $\mathcal{G}'$ cannot be directly realized, it represents an excellent candidate in the graph space with which the target nodes $\mathcal{T}$ could evade detection with high probability. Thus, our intuition is to manipulate features so that the resulting proximity graph would approximate $\mathcal{G}'$. To this end, we utilize the guidance from the following two aspects.


\paragraph{Guidance on attack node selection} In the graph-space attack, we denote the set of edges/non-edges modified by the attacker as $\mathcal{E}_a$. Intuitively, the modification of $\mathcal{E}_a$ will influence the anomaly scores of the targets most. To preserve such an influence, we set the attack nodes as those ones incident to the edges/non-edges in  $\mathcal{E}_a$. Note that we can always easily adjust the budget in the graph-space attack such that the constraint $|\mathcal{Z}| \leq K'$ is satisfied. 

After fixing the attack nodes $\mathcal{Z}$, we can follow a similar approach in the graph-space attack to optimize the features. Specifically, we replace the indicator function in~\eqref{opt:feature} with the sum of anomaly scores of target nodes. For discrete features, we relaxed their discrete feature domain to the continuous space denoted by $\tilde{\mathcal{X}}$. Then, let $\tilde{\mathbf{x}}_i\in \tilde{\mathcal{X}}$ denote the relaxed feature, and $\tilde{\mathbf{X}}=\{\tilde{\mathbf{x}}_i| i\in V\}$, the feature-space attack can be formulated as the following optimization problem:
\begin{equation}
\label{loss:anomaly}
\min\limits_{\hat{\mathbf{x}}_i\in\tilde{\mathcal{X}}, i \in \mathcal{Z}} \mathcal{L}_a(\tilde{\mathbf{X}})=\textstyle\sum_{v \in \mathcal{T}}\mathcal{A}(v;\tilde{\mathbf{X}}). 
\end{equation}

We term this type (with objective function $\mathcal{L}_a$) of feature-space attacks as \textbf{G-Guided}. We can straightforwardly adopt the two algorithms \textbf{alterI-attack} and \textbf{cf-attack} to solve the optimization problem~\eqref{loss:anomaly}, resulting in two variants named \textbf{G-Guided-alterI} and \textbf{G-Guided-cf}.

\paragraph{Guidance on reformulation of attack objective}
Beyond the selection of attack nodes, the poisoned graph $\mathcal{G}'$ obtained from the graph-space attack can provide vital information for optimizing the features. Specifically, we aim to optimize the features such that the proximity graph constructed from the modified features would approximate $\mathcal{G}'$ as much as possible. 
To this end, we reformulate the attack objective function as follows: 
\begin{equation}
    \label{loss:attack_graph}
    \mathcal{L}_g(\tilde{\mathbf{X}})=\sum_{\{(i,j)|\bar{b}_{ij}>0\}\atop i/j \in \mathcal{Z}} |sim(\mathbf{x}_i,\mathbf{x}_j)-\hat{w}_{i,j}|,
\end{equation}
where $\hat{w}_{ij}$ is the element in the attacked adjacency matrix $\hat{W}$.
This objective function aims to push the similarity between control nodes $\mathbf{x}_i$ and other nodes $\mathbf{x}_j$ (denoted by $sim(\mathbf{x}_i,\mathbf{x}_j)$) close to the manipulated edges $\hat{w}_{i,j}$ in the poisoned graph $\mathcal{G}'$. Intuitively, minimizing $\mathcal{L}_g$ allows us to approximate an inverse problem: given $\mathcal{G}'$, find the node features from which $\mathcal{G}'$ can be constructed. Since~\eqref{loss:attack_graph} is a single-level function, we can directly adopt PGD (similar to the graph-space attack) to solve the optimization problem. We term this type (with objective function $\mathcal{L}_g$) of feature-space attacks as \textbf{G-Guided-plus}. The attack algorithm in the feature space is summarized in Alg.~\ref{alg:featureAttack} and Fig.~\ref{fig:Attack_Framework} (bottom).


\begin{algorithm}[htb]
\caption{Feature-space attack.}
\label{alg:featureAttack}
\begin{algorithmic}[1]
\State \textbf{Input:} Feature matrix $\tilde{\mathbf{X}}$, attack nodes $\mathcal{Z}$, attack iteration $T$, learning rate $\eta$.
\State \textbf{Output:} Attacked feature matrix $\hat{\mathbf{X}}$.
\Function{FeatureAttack}{$\tilde{\mathbf{X}}$, $\mathcal{Z}$, $T$, $\eta$}
\For {$t=1$ to $T$}
\State Construct graph based on $\tilde{\mathbf{X}}$ (Section~\ref{target2}).
\State Update similarity scores $\vec{s}$ with Eqn.~\ref{eqn:RW_general}.
\State Update the anomaly scores based on $\vec{s}$ (Eqn.~\ref{eqn:AS_px}). 
\State Update objective function $\mathcal{L} (\tilde{\mathbf{X}})$.
\For {each attack nodes ${\tilde{\mathbf{x}}_i, i\in \mathcal{Z}}$}
\State Calculate gradient $g_{i}=\tilde{\mathbf{x}}_{i}-\eta\frac{\partial \mathcal{L}(\tilde{\mathbf{X}})}{\partial \tilde{\mathbf{x}}_{i}}$.
\State Project $g_{i,j}$ into the feasible set $\tilde{\mathcal{X}}$.
\State Update $\tilde{\mathbf{x}}_{i,j}$ in $\tilde{\mathbf{X}}$.
\EndFor
\EndFor
\State Rounding the attacked feature:
    \begin{equation}
    \nonumber
    \hat{\mathbf{x}}_i= 
    \begin{cases} 
        round(\tilde{\mathbf{x}}_{ij})  & \text{if feature $j$ is discrete},\\
        \tilde{\mathbf{x}}_{ij} & \text{otherwise}.
    \end{cases}
    \end{equation}
\Return Attacked feature matrix $\hat{\mathbf{X}}$.
\EndFunction
\end{algorithmic}
\end{algorithm}

\section{Experiments}
\label{Experiments}

In this section, we evaluate the performances of our proposed attacks
by answering these four major questions:
1) Are our proposed graph-space attacks effective? 2) What are the preferences of the proposed graph attack?
3) How effective are the graph-guided feature-space attacks? 4) How is the transferability of the graph-guided feature-space attacks?
\subsection{Datasets and Experiment Settings}
We consider four datasets that are commonly used for graph-based anomaly detection: Paper-Author, Magazine, 
KDD-99, and MINIST (outlier). Among them, the first two are bipartite graphs while the latter two datasets are feature data. Below is the detailed description. \textbf{All datasets, source code for our proposed attacks, and evaluated baselines are in our \textit{GitHub} link.}~\footnote{https://github.com/Yuni-Lai/CoupledAttackRW.}
\begin{itemize}
    \item Paper-Author~\cite{sun2005neighborhood}: This dataset contains papers crawled from the arXiv preprint database. Nodes $U$ represent papers, while nodes $V$ represent authors. An edge $\langle u, v\rangle$ indicates that the author $v$ is shown in the paper $u$. We randomly sampled $10,000$ records and deleted nodes with degrees lower than $5$, resulting in $|U|,|V|=2311,\,405$. We manually inject $10\%$ of anomaly nodes following~\cite{hooi2016fraudar}.
    \item Magazine: This dataset contains Amazon Aeviews Data~\footnote{https://nijianmo.github.io/amazon/, accessed May 2023.} under the category of Magazine Subscriptions. We randomly sampled $100,000$ records and removed nodes with degrees lower than $3$, resulting in $|U|,|V|=1079,\, 1180$ nodes. We also injected $10\%$ of anomaly nodes manually following~\cite{hooi2016fraudar}.
    \item KDD-99~\cite{moonesinghe2006outlier}: The dataset contains network intrusion data with $41$ features and $4$ types of attacks. We randomly sampled $10,000$ benign data and $100$ anomaly data for the experiment. 
    \item MINIST (outlier): This is a subset of the MINST handwritten digits dataset, created for the outlier detection task in Outlier Detection DataSets~\footnote{http://odds.cs.stonybrook.edu/, accessed May 2023.}. It contains a total of $7603$ images, with $6903$ images of digit-$0$ regarded as normal points and $700$ images of digit-$6$ regarded as outliers. Each sample has $100$ features. 
\end{itemize}

\subsection{Experimental Settings}

We conduct our experiments on Ubuntu $20.04$ system with an NVIDIA GeForce RTX $3090$ GPU, Python $3.7$, and PyTorch $1.10.0$. All the experiments are repeated $10$ times with different random seeds, and different target nodes are sampled. 
\subsubsection{Target nodes and budgets}
For attacking \textit{BiGraphRW} model, we sample $5$ target nodes from the top $100$ anomaly nodes, while in \textit{ProxGraphRW} model, we sample $20$ target nodes from the top $100$ anomaly nodes. We set the attack edge budget proportion to the sum of \textit{target node degrees} (e.g., budget $10\%: K = 0.1 \times \sum_{v\in\mathcal{T}} d(v)$, where $d(v)$ is the degree of node $v$). Setting the budget associated with node degree is commonly adopted in targeted attacks such as Nettack~\cite{zugner2018adversarial,bojchevski2019adversarial}. In feature-space attacks, we set the number of attack nodes as the number of nodes involved in the \textbf{alterI} graph-space attack.

\subsubsection{Evaluation metrics}
Our main focus is to evaluate the effectiveness of our proposed method facilitating target nodes to evade detection under different detection thresholds. Usually, the detection threshold $\theta$ is set to the proportion of data size, and we evaluate the level of detect ratio as the top $5\%$ and $10\%$. We then use the \textit{evasion rate} $\mathsf{ER}$ of target nodes under these detection thresholds as the main metric. Specifically, the evasion rate is computed as $\mathsf{ER} = n_0/|\mathcal{T}|$, where $n_0$ is the number of target nodes not shown in the top $5\%$ or $10\%$ anomaly scores (i.e., evaded successfully). Besides, we also evaluate the average anomaly scores of target nodes.
\subsubsection{Baselines}
We evaluate the effectiveness of our proposed attacks against several baselines for both graph-space attacks and feature-space attacks.
\paragraph{Graph-space attack}
The most relevant prior work is \cite{bojchevski2019adversarial}. Although this work also proposes a targeted attack for the RW model, it is specific to the DeepWalk model and cannot be directly applied to our RWAD systems. Therefore, we transfer its targeted attack to our model. Besides, we also adopt two common baselines RndAdd and DegAdd following \cite{bojchevski2019adversarial}.
\begin{itemize}
    \item \textbf{RndAdd}: This baseline randomly adds candidate edges, where the candidate edges are the edges incident to target nodes. 
    \item \textbf{DegAdd}: This baseline adds candidate edges with the top-$K$ highest degrees, where the candidate edges are also the edges incident to target nodes. 
    \item \textbf{DeepWalk}\cite{bojchevski2019adversarial}: In this baseline, we transfer the attack designed for DeepWalk to RWAD models. 
    \item Our methods: \textbf{alterI} and \textbf{cf} are our proposed attacks with alternative iteration and closed-form solution, respectively.
\end{itemize}
\paragraph{Feature-space attack}
To evaluate the effectiveness of our graph-guided attack in node selection, we include random selection as a baseline for comparison. 
\begin{itemize}
    \item \textbf{VanillaOpt}: This baseline randomly selects attack nodes from candidates and optimizes node features with the objective function $\mathcal{L}_a(\tilde{\mathbf{X}})$ in \eqref{loss:anomaly} with strategy \textbf{alterI}. 
    \item Our methods: We use the graph-space attacks to guide the selection of attack nodes and choose  $\mathcal{L}_a(\tilde{\mathbf{X}})$ as the attack objective function, resulting in two attack methods \textbf{G-guided-alterI} and \textbf{G-guided-cf}, which adopt \textbf{alterI} and \textbf{cf} to optimize node features respectively. In addition, when the objective function $\mathcal{L}_g(\tilde{\mathbf{X}})$~\eqref{loss:attack_graph} is selected, the attack method is \textbf{G-guided-plus}.
\end{itemize}
\subsubsection{Hyper-parameters}
Grid search is employed to find the optimal hyper-parameters in all the attack methods over different datasets. For \textit{BiGraphRW} model, the regularization parameter $\lambda= 1 \times 10^{-6}$, learning rate $lr=1.0$, $60$ epochs with SGD optimizer. 
For \textit{ProxGraphRW} model, we evaluate proximity graphs constructed with cosine similarity and correlation similarity. The similarity threshold $\epsilon$ for constructing the graph is $0.8$ for the KDD-99 dataset and $0.5$ for the MNIST dataset; the regularization parameter $\lambda= 1 \times 10^{-4}$, learning rate $lr=1.0$, $35$ epochs for the KDD-99 dataset and $100$ for the MNIST dataset with Adam optimizer in the graph-space attack. In feature-space attack, learning rate $lr=1.0$, $500$ epochs with Adam optimizer.

\subsection{Performance of Graph-space Attacks}

To begin with, we evaluate the performance of the target RWAD models over corresponding datasets. As shown in Tab.~\ref{tab:auc}, both models achieved an AUC (area under reception curve) of at least $0.89$, demonstrating a strong ability to identify anomalies.

\subsubsection{Effectiveness of attacks}
We present the evasion rates $\mathsf{ER}$ of those attack methods under different detection levels (top-$5\%/10\%$) in Tab.~\ref{tab:G-attack_bi} and \ref{tab:G-attack_pro}.
We observe that our proposed graph attack methods, \textbf{alterI} and \textbf{cf}, significantly outperform other baselines on all datasets. For instance, at the detection level of top-$5\%$, our results indicate that our proposed attack on \textit{BiGraphRW} model is highly effective, achieving an evasion rate of over $85\%$ with a budget of $40.0\%$. Similarly, for \textit{ProxGraphRW} model, with a budget of $60.0\%$, the evasion rate (under detection threshold top-$10\%$) is over $80\%$ on the MNIST dataset. 
Since the MNIST dataset is relatively easier to attack, we report the attack performance at a higher detection threshold. The reason why the \textbf{DeepWalk} method does not exhibit a strong attack effect could be attributed to its transferability across different types of random walk models.
\vspace{-5pt}
\begin{table}[htb]
\caption{AUC of RWAD.}
\label{tab:auc}
\centering
\begin{tabular}{lllll}
\hline
Models  & \multicolumn{2}{c}{\textit{BiGraphRW}} & \multicolumn{2}{c}{\textit{ProxGraphRW}} \\ \hline
Dataset & Author-Paper    & Magazine    & KDD-99          & MNIST         \\ \hline
AUC     & 1.00            & 0.89        & 0.98            & 0.90          \\ \hline
\end{tabular}
\vspace{-10pt}
\end{table}

\begin{table}[thb]
\centering
\setlength{\tabcolsep}{3.1pt}
\caption{Graph attack results on \textit{BiGraphRW} model.}
\label{tab:G-attack_bi}
\begin{tabular}{ccllllll}
\hline
\multicolumn{1}{l}{Dataset} & \multicolumn{1}{l}{Metrics} & budget & RndAdd & DegAdd & DeepWalk & alterI & cf \\ \hline
\multirow{12}{*}{\begin{tabular}[c]{@{}c@{}}Author-\\ Paper\end{tabular}} & \multirow{6}{*}{\begin{tabular}[c]{@{}c@{}}$\mathsf{ER}$ \\ (5\%)\end{tabular}} & 0\% & 0.560 & 0.560 & 0.560 & 0.560 & 0.560 \\
 &  & 20\% & 0.560 & 0.560 & 0.578 & \ul{0.720} & \textbf{0.760} \\
 &  & 40\% & 0.560 & 0.560 & 0.578 & \ul{0.880} & \textbf{0.940} \\
 &  & 60\% & 0.580 & 0.560 & 0.578 & \ul{0.920} & \textbf{0.960} \\
 &  & 80\% & 0.580 & 0.560 & 0.600 & \ul{0.980} & \textbf{1.000} \\
 &  & 100\% & 0.580 & 0.560 & 0.600 & \textbf{1.000} & \textbf{1.000} \\ \cline{2-8} 
 & \multirow{6}{*}{\begin{tabular}[c]{@{}c@{}}$\mathsf{ER}$\\ (10\%)\end{tabular}} & 0\% & 0.000 & 0.000 & 0.000 & 0.000 & 0.000 \\
 &  & 20\% & 0.000 & 0.000 & 0.000 & \ul{0.060} & \textbf{0.280} \\
 &  & 40\% & 0.000 & 0.000 & 0.000 & \ul{0.260} & \textbf{0.360} \\
 &  & 60\% & 0.000 & 0.000 & 0.000 & \textbf{0.460} & \ul{0.360} \\
 &  & 80\% & 0.000 & 0.000 & 0.000 & \textbf{0.660} & \ul{0.600} \\
 &  & 100\% & 0.000 & 0.000 & 0.000 & \textbf{0.820} & \ul{0.740} \\ \hline
\multirow{12}{*}{Magzine} & \multirow{6}{*}{\begin{tabular}[c]{@{}c@{}}$\mathsf{ER}$\\ (5\%)\end{tabular}} & 0\% & 0.740 & 0.740 & 0.740 & 0.740 & 0.740 \\
 &  & 20\% & 0.760 & 0.740 & 0.760 & \ul{0.760} & \textbf{0.780} \\
 &  & 40\% & 0.760 & 0.760 & 0.760 & \textbf{0.880} & \ul{0.860} \\
 &  & 60\% & 0.760 & 0.760 & 0.760 & \textbf{0.920} & \ul{0.880} \\
 &  & 80\% & 0.760 & 0.760 & 0.760 & \textbf{0.960} & \ul{0.880} \\
 &  & 100\% & 0.780 & 0.760 & 0.760 & \textbf{0.980} & \ul{0.880} \\ \cline{2-8} 
 & \multirow{6}{*}{\begin{tabular}[c]{@{}c@{}}$\mathsf{ER}$\\ (10\%)\end{tabular}} & 0\% & 0.380 & 0.380 & 0.380 & 0.380 & 0.380 \\
 &  & 20\% & 0.380 & 0.380 & 0.380 & \ul{0.500} & \textbf{0.600} \\
 &  & 40\% & 0.400 & 0.380 & 0.380 & \ul{0.560} & \textbf{0.740} \\
 &  & 60\% & 0.400 & 0.380 & 0.400 & \ul{0.620} & \textbf{0.760} \\
 &  & 80\% & 0.400 & 0.380 & 0.400 & \ul{0.760} & \textbf{0.820} \\
 &  & 100\% & 0.400 & 0.400 & 0.400 & \ul{0.840} & \textbf{0.860} \\ \hline
\end{tabular}
\end{table}

\begin{table}[htb]
\centering
\setlength{\tabcolsep}{2.5pt}
\caption{Graph attack results on \textit{ProxGraphRW} model.}
\label{tab:G-attack_pro}
\begin{tabular}{ccllllll}
\hline
\multicolumn{1}{l}{Dataset} & \multicolumn{1}{l}{Similarity} & budget & RndAdd & DegAdd & DeepWalk & alterI & cf \\ \hline
\multirow{14}{*}{\begin{tabular}[c]{@{}c@{}}KDD99\\ $\mathsf{ER}$ \\(5\%)\end{tabular}} & \multirow{7}{*}{cosine} & 0\% & 0.045 & 0.045 & 0.045 & 0.045 & 0.045 \\
 &  & 10\% & 0.045 & 0.045 & 0.045 & \ul{0.050} & \textbf{0.055} \\
 &  & 20\% & 0.045 & 0.045 & 0.045 & \ul{0.155} & \textbf{0.245} \\
 &  & 40\% & 0.045 & 0.045 & 0.045 & \ul{0.605} & \textbf{0.620} \\
 &  & 60\% & 0.045 & 0.045 & 0.045 & \ul{0.745} & \textbf{0.825} \\
 &  & 80\% & 0.055 & 0.045 & 0.050 & \ul{0.775} & \textbf{0.865} \\
 &  & 100\% & 0.085 & 0.045 & 0.060 & \ul{0.775} & \textbf{0.875} \\ \cline{2-8} 
 & \multirow{7}{*}{correlation} & 0\% & 0.045 & 0.045 & 0.045 & 0.045 & 0.045 \\
 &  & 10\% & 0.045 & 0.045 & 0.045 & \ul{0.055} & \textbf{0.060} \\
 &  & 20\% & 0.045 & 0.045 & 0.045 & \ul{0.110} & \textbf{0.150} \\
 &  & 40\% & 0.045 & 0.045 & 0.045 & \ul{0.315} & \textbf{0.405} \\
 &  & 60\% & 0.045 & 0.045 & 0.045 & \ul{0.575} & \textbf{0.690} \\
 &  & 80\% & 0.050 & 0.045 & 0.050 & \ul{0.670} & \textbf{0.735} \\
 &  & 100\% & 0.060 & 0.045 & 0.055 & \ul{0.695} & \textbf{0.845} \\ \hline
\multirow{14}{*}{\begin{tabular}[c]{@{}c@{}}MNIST\\ $\mathsf{ER}$\\ (10\%)\end{tabular}} & \multirow{7}{*}{cosine} & 0\% & 0.000 & 0.000 & 0.000 & 0.000 & 0.000 \\
 &  & 10\% & 0.000 & 0.000 & 0.000 & \textbf{0.060} & \ul{0.045} \\
 &  & 20\% & 0.000 & 0.000 & 0.000 & \textbf{0.210} & \ul{0.135} \\
 &  & 40\% & 0.000 & 0.000 & 0.000 & \textbf{0.585} & \ul{0.515} \\
 &  & 60\% & 0.000 & 0.000 & 0.020 & \ul{0.800} & \textbf{0.860} \\
 &  & 80\% & 0.005 & 0.000 & 0.030 & \ul{0.940} & \textbf{0.975} \\
 &  & 100\% & 0.050 & 0.000 & 0.070 & \ul{0.985} & \textbf{0.995} \\ \cline{2-8} 
 & \multirow{7}{*}{correlation} & 0\% & 0.000 & 0.000 & 0.000 & 0.000 & 0.000 \\
 &  & 10\% & 0.000 & 0.000 & 0.000 & \ul{0.045} & \textbf{0.060} \\
 &  & 20\% & 0.000 & 0.000 & 0.000 & \textbf{0.205} & \ul{0.185} \\
 &  & 40\% & 0.000 & 0.000 & 0.000 & \textbf{0.555} & \ul{0.550} \\
 &  & 60\% & 0.010 & 0.000 & 0.010 & \ul{0.770} & \textbf{0.825} \\
 &  & 80\% & 0.040 & 0.000 & 0.045 & \textbf{0.940} & \textbf{0.940} \\
 &  & 100\% & 0.095 & 0.005 & 0.080 & \textbf{0.995} & \textbf{0.995} \\ \hline
\end{tabular}
\vspace{-5pt}
\end{table}

Comparing \textbf{alterI} and \textbf{cf} attack, it was observed that \textbf{cf} attack slightly outperforms \textbf{alterI} in most cases. In our experiments, we observe that \textbf{cf} can achieve significantly lower attack loss in the continuous domain (i.e., $\tilde{B}$). However, when discretizing the optimization results, the attack performance is not guaranteed to be preserved. While \textbf{cf} is generally more effective (also observed for feature-space attacks in Section~\ref{featureAttackResults}), \textbf{alterI} is more efficient on larger graphs such as KDD-99 and MNIST (see Tab.~\ref{tab:runtime}).

\subsubsection{Preferences of graph attack}
We further present a more detailed analysis of the graph attack results in Fig.~\ref{fig:graph_attack_analysis}, in which Fig.~\ref{fig:analysis_a} and \ref{fig:analysis_b} show the proportion of the attacked nodes (to the total number of nodes) corresponding to different budgets. On average, only about $1\%-6\%$ (KDD-99) and $0.3\%-2.7\%$ (MNIST) of nodes are involved in the edge modification under various budgets (Fig.~\ref{fig:analysis_b}).
In Fig.~\ref{fig:analysis_c}, we present the node degrees of attack nodes and others, and we observe that the attacker prefers nodes with lower degrees as attack nodes.
Fig.~\ref{fig:analysis_d} presents the weights changed in the attack. We observe that when the budget is limited, the attacker tends to delete the edges with the original weights
close to $1.0$ or add edges with original weights close to $0$. The attacker mainly adds/deletes edges between target nodes and other nodes (Fig.~\ref{fig:analysis_e}), and the target-other edge modification tends to increase the degree of target nodes (Fig.~\ref{fig:analysis_f}). These actually provide convenience for our graph-guided feature attack with attack loss $\mathcal{L}_g(\tilde{\mathbf{X}})$, where the target node features are fixed (the edges between target-target are fixed) and the attack nodes can be optimized to be close to the desired edge weights (the edges between target nodes and control nodes). We observe similar phenomena in the MNIST dataset. 

\begin{figure}[htb]
\centering
    \subfigure[Attack nodes (KDD-99).]{
    \includegraphics[width=0.22\textwidth,height=2.5cm]{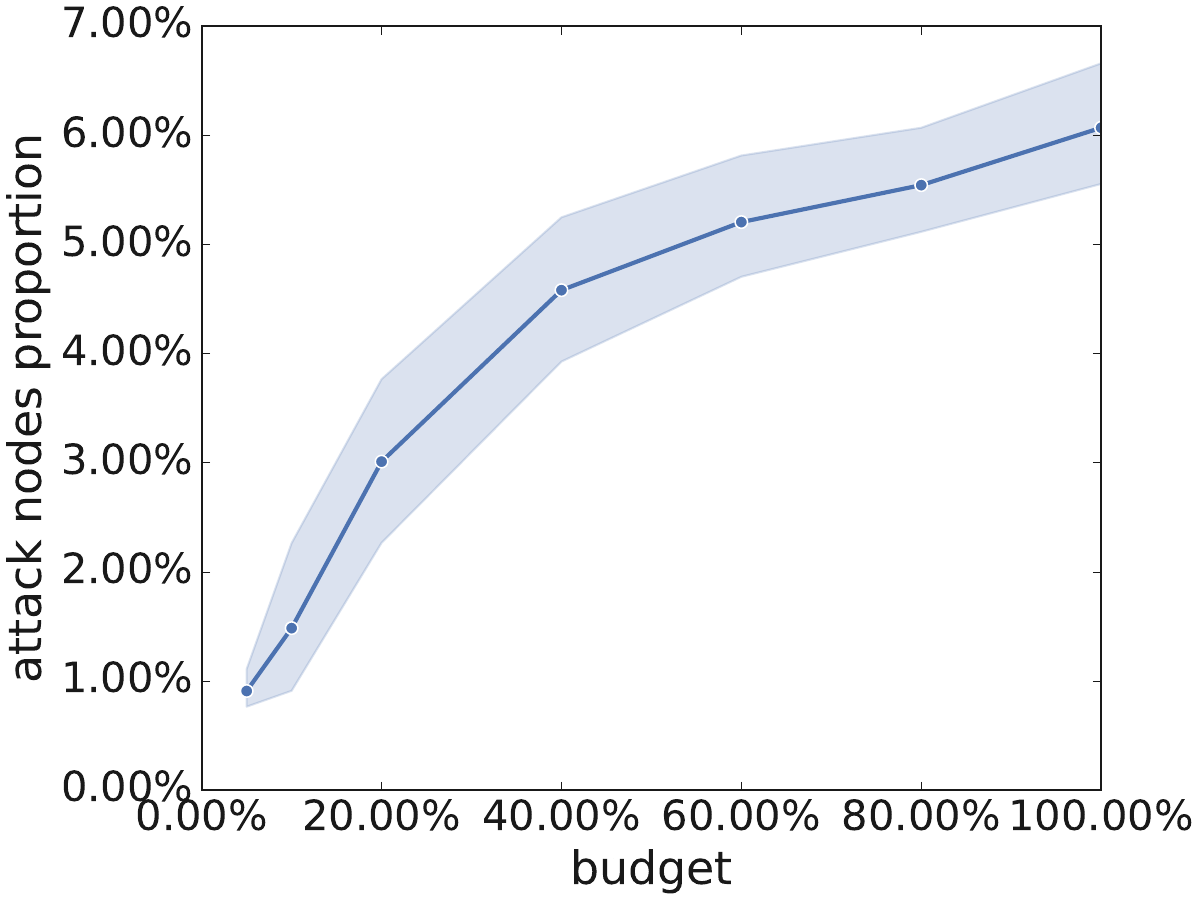}\label{fig:analysis_a}
    }
    \subfigure[Attack nodes (MNIST).]{
    \includegraphics[width=0.22\textwidth,height=2.5cm]{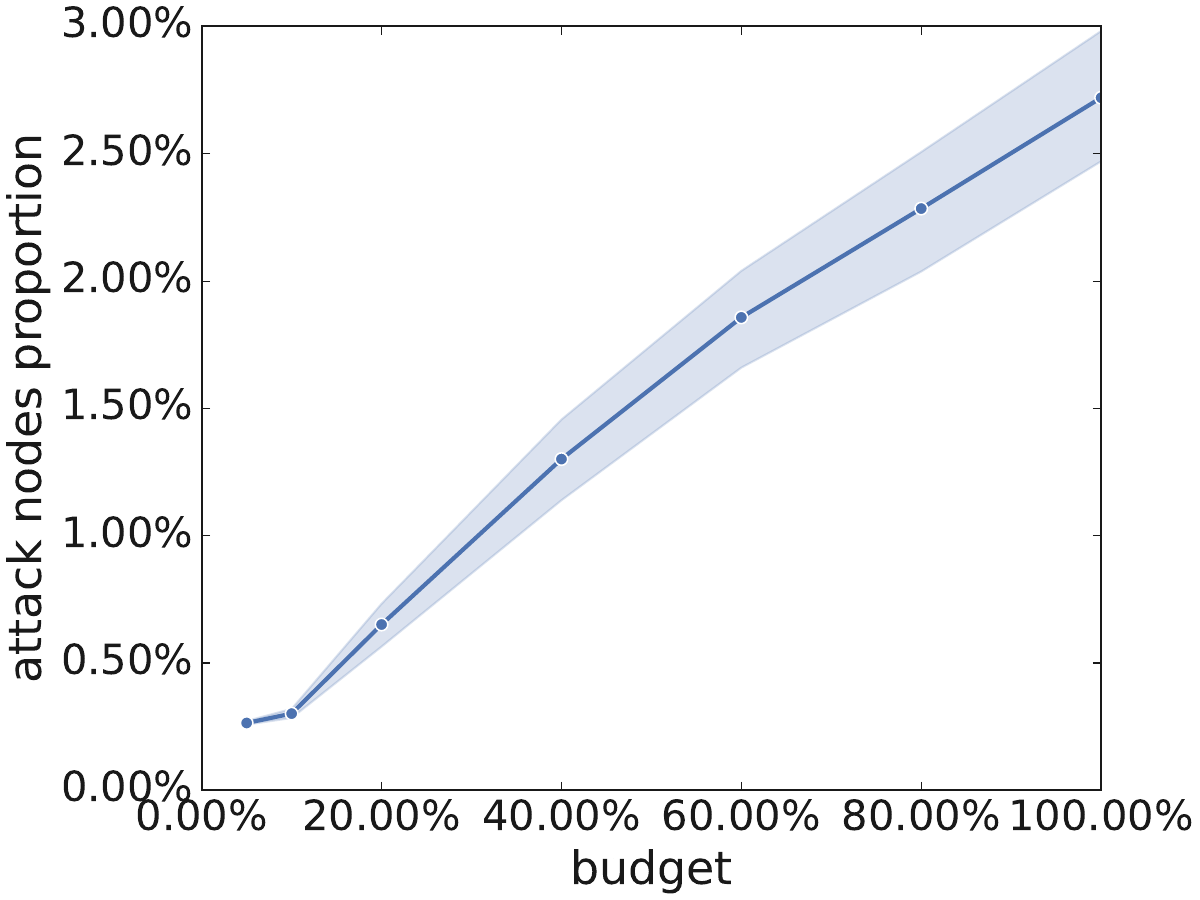}\label{fig:analysis_b}
    }
    \subfigure[The attacker prefers nodes with lower degrees.]{
    \includegraphics[width=0.22\textwidth,height=2.5cm]{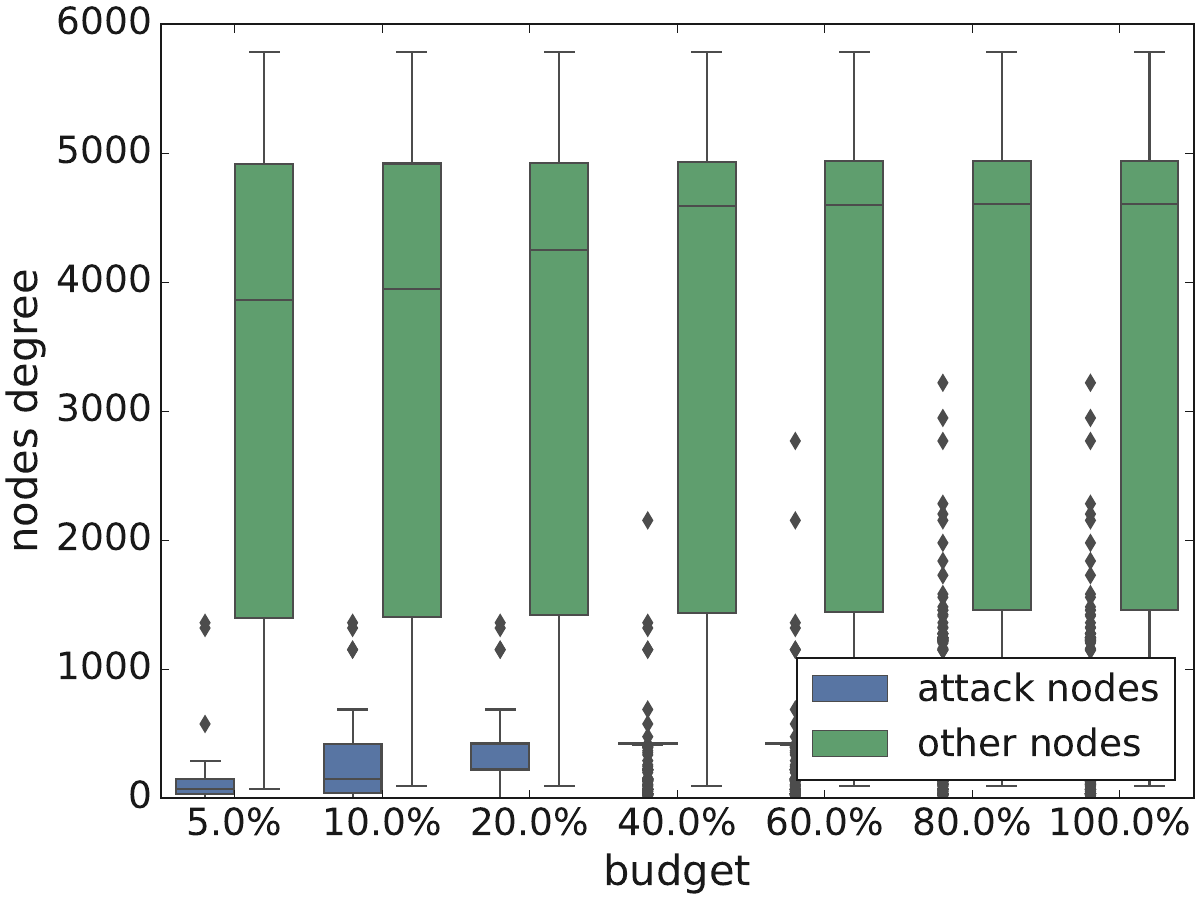}
    \label{fig:analysis_c}}
    \subfigure[The weights changed by attacker.]{
    \includegraphics[width=0.22\textwidth,height=2.5cm]{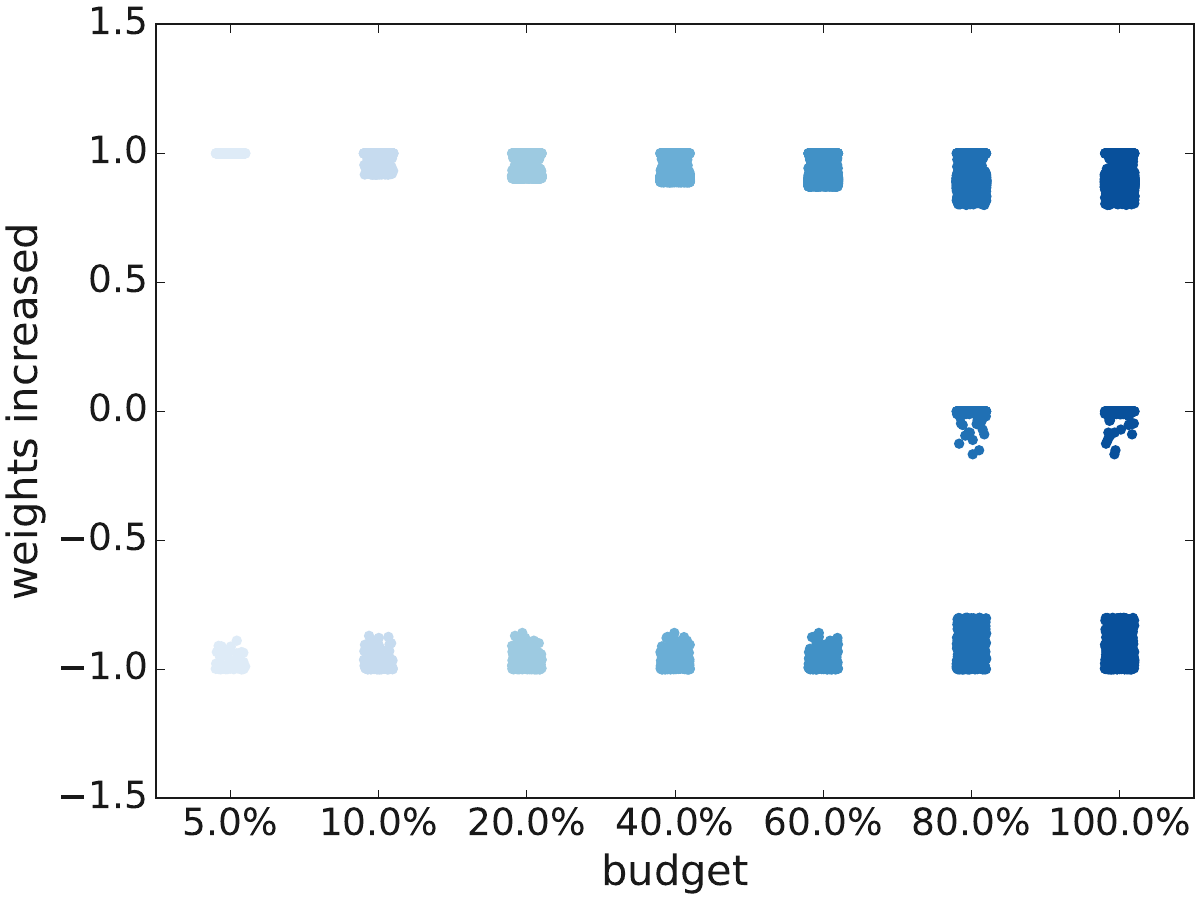}
    \label{fig:analysis_d}}
    \subfigure[The attacker prefers adding edges between target nodes and the others.]{
    \includegraphics[width=0.22\textwidth,height=2.5cm]{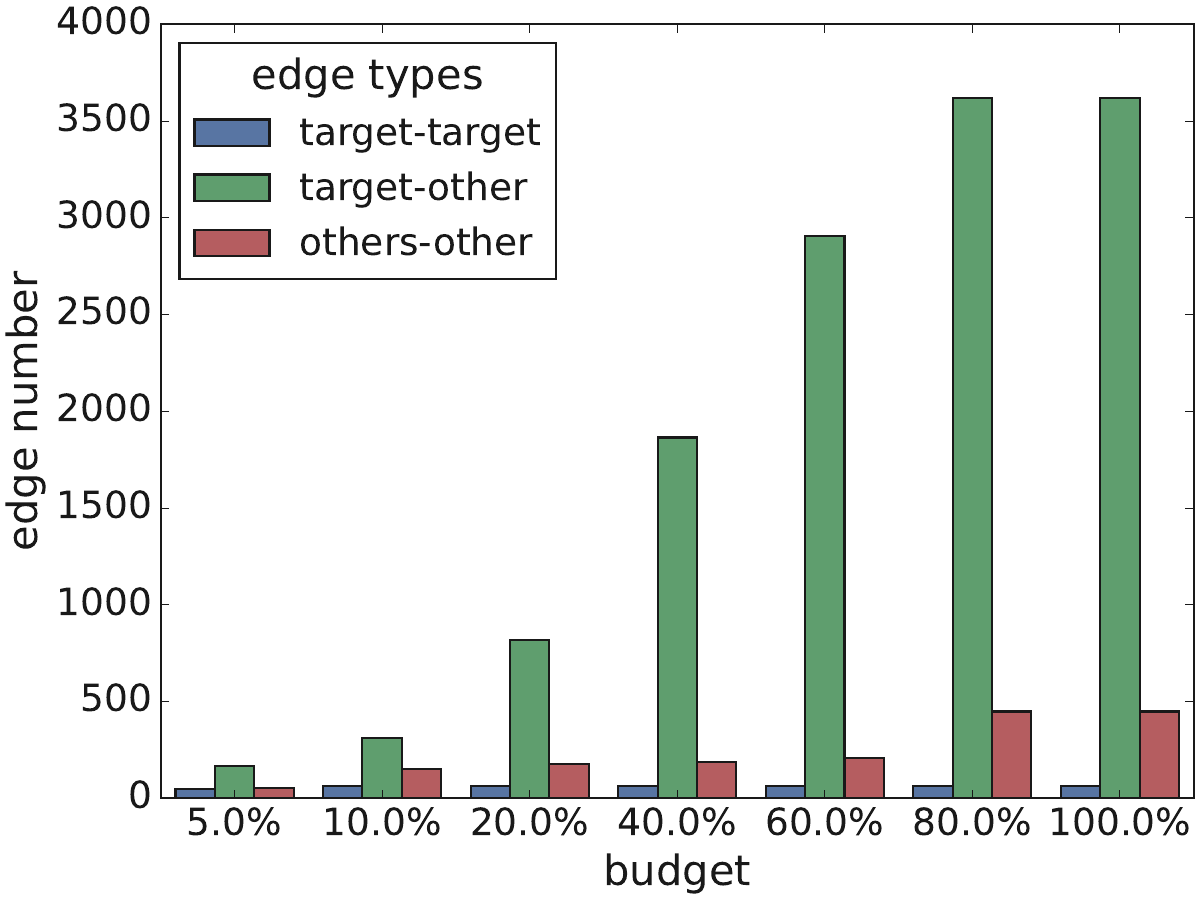}
    \label{fig:analysis_e}}
    \subfigure[The attacker increases the degree of target nodes.]{
    \includegraphics[width=0.22\textwidth,height=2.5cm]{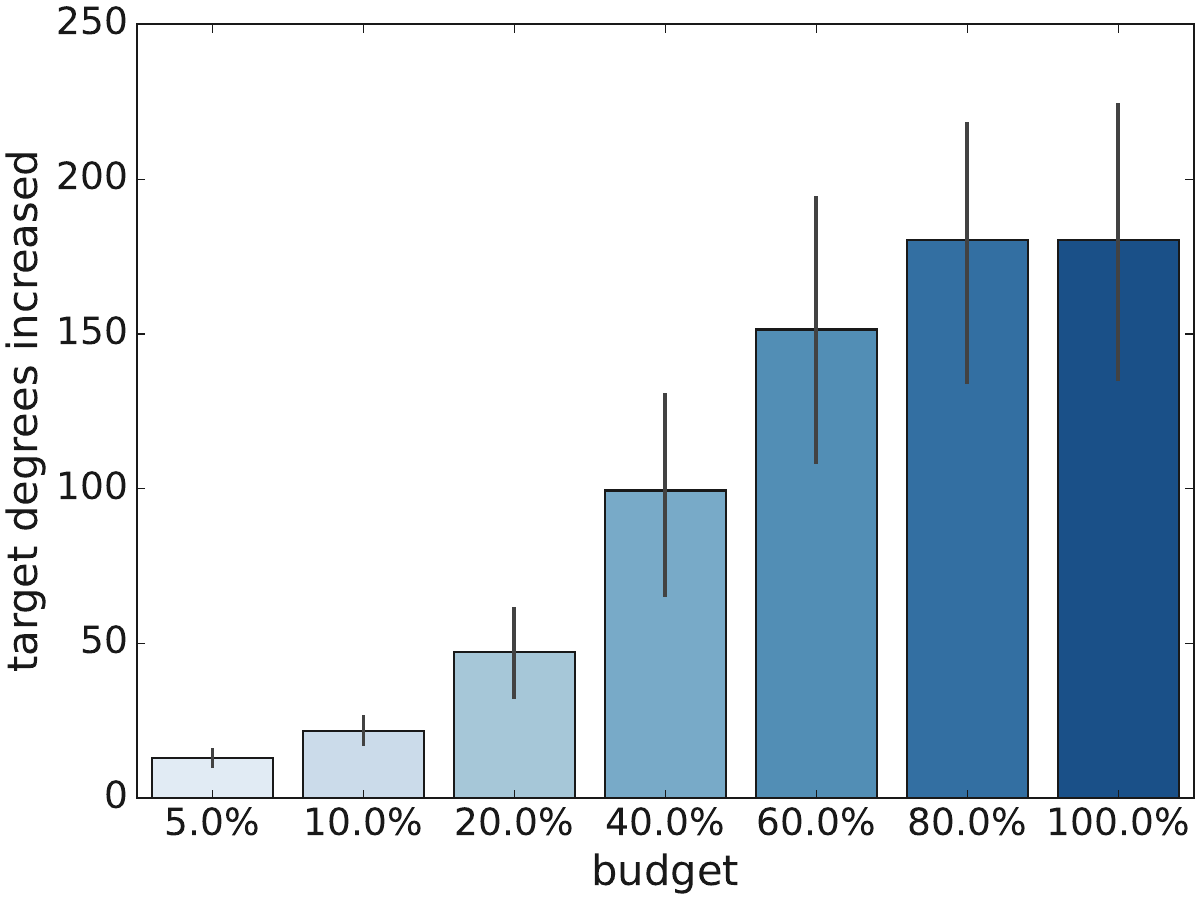}
    \label{fig:analysis_f}}
 \caption{Graph-space attack (\textbf{alterI}) result analysis on KDD-99 dataset.}
\label{fig:graph_attack_analysis}
\end{figure}

\begin{figure}[htb]
\centering
  \subfigure[KDD-99]{
    \includegraphics[width=0.22\textwidth,height=2.5cm]{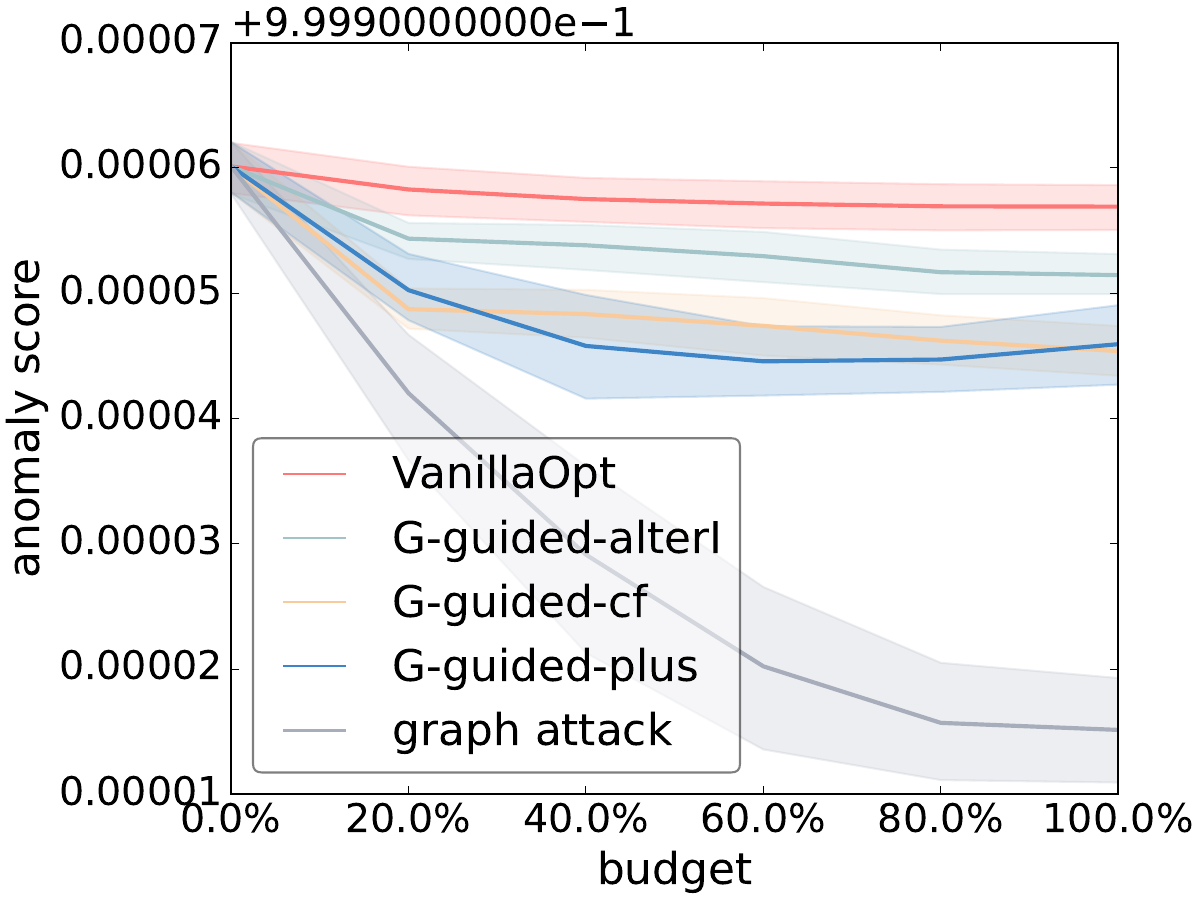}\label{fig:feature_attack_a}
    }
    \subfigure[KDD-99]{
    \includegraphics[width=0.22\textwidth,height=2.5cm]{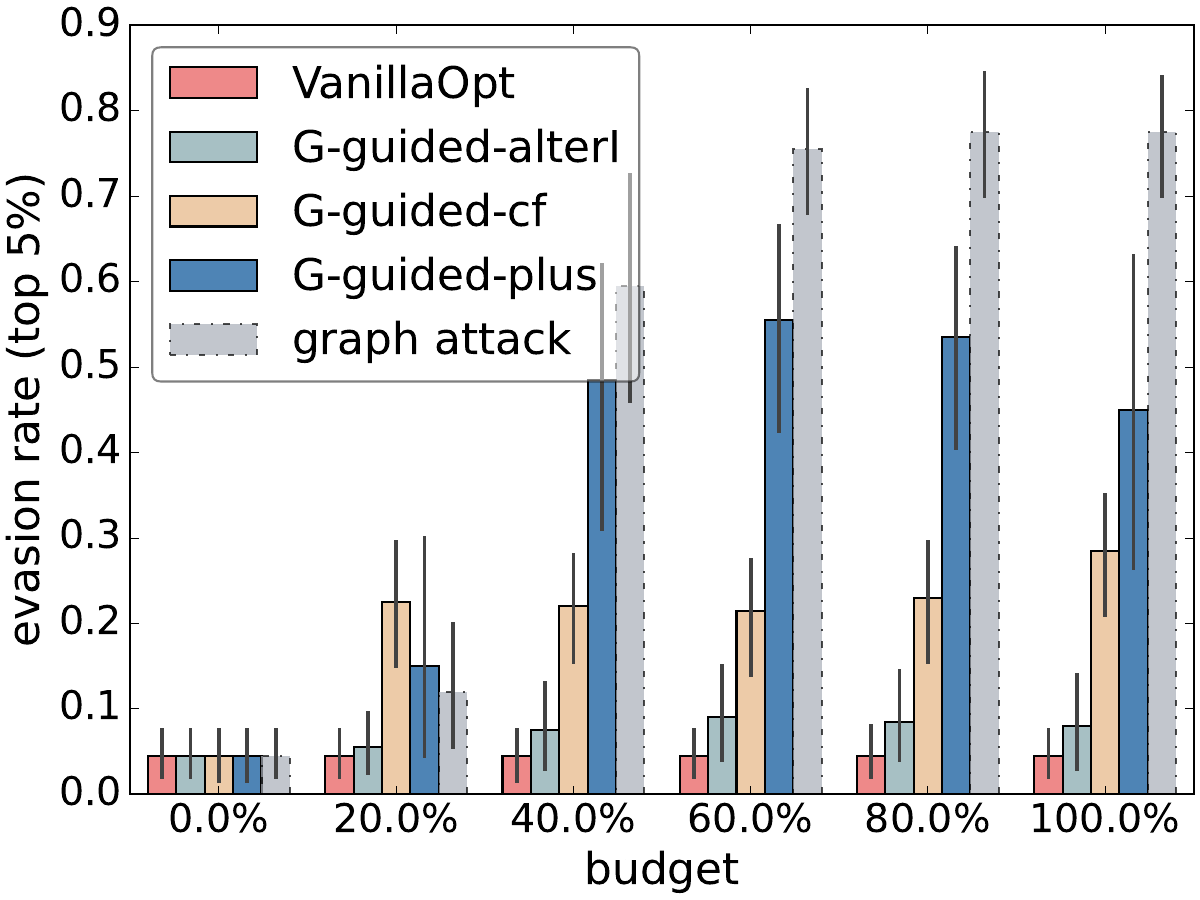}\label{fig:feature_attack_b}
    }
    \subfigure[MNIST]{
    \includegraphics[width=0.22\textwidth,height=2.5cm]{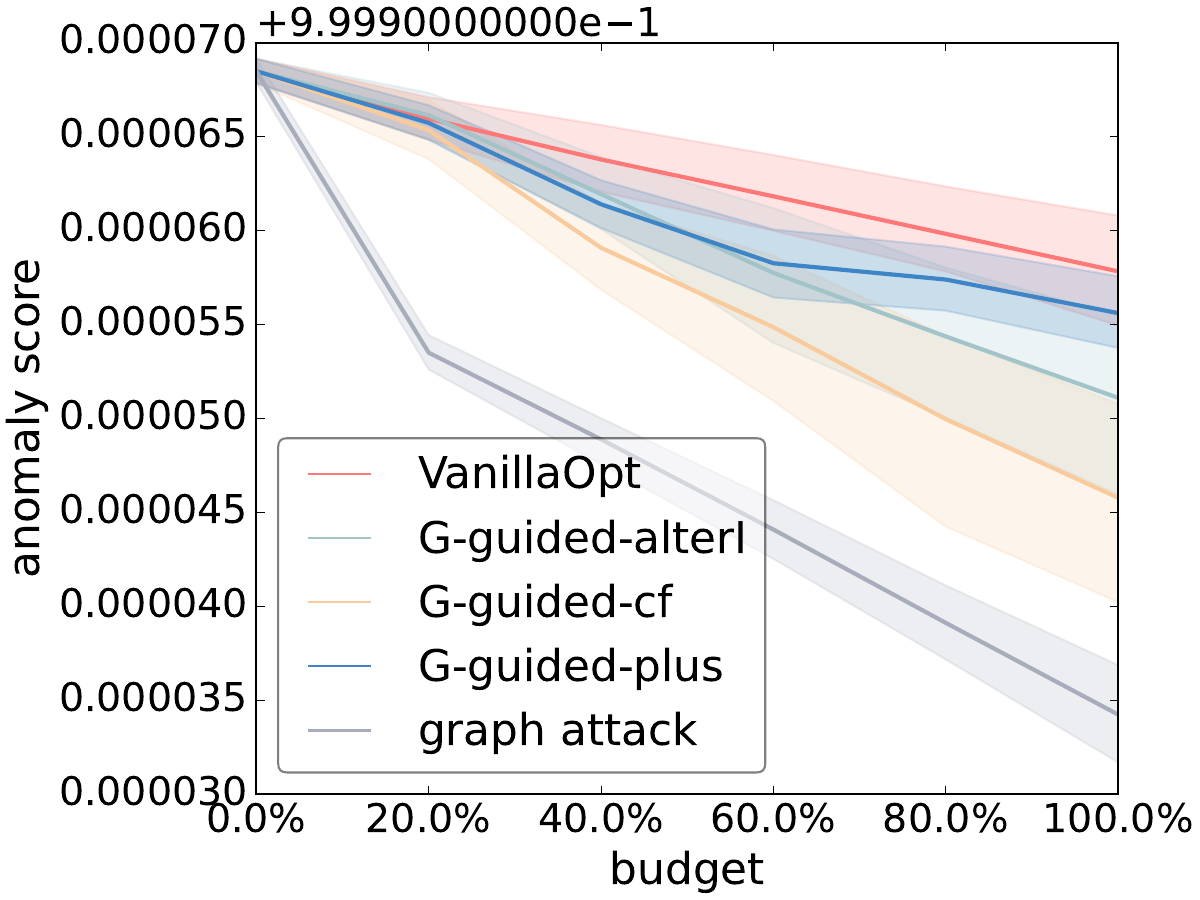}\label{fig:feature_attack_c}
    }
    \subfigure[MNIST]{
    \includegraphics[width=0.22\textwidth,height=2.5cm]{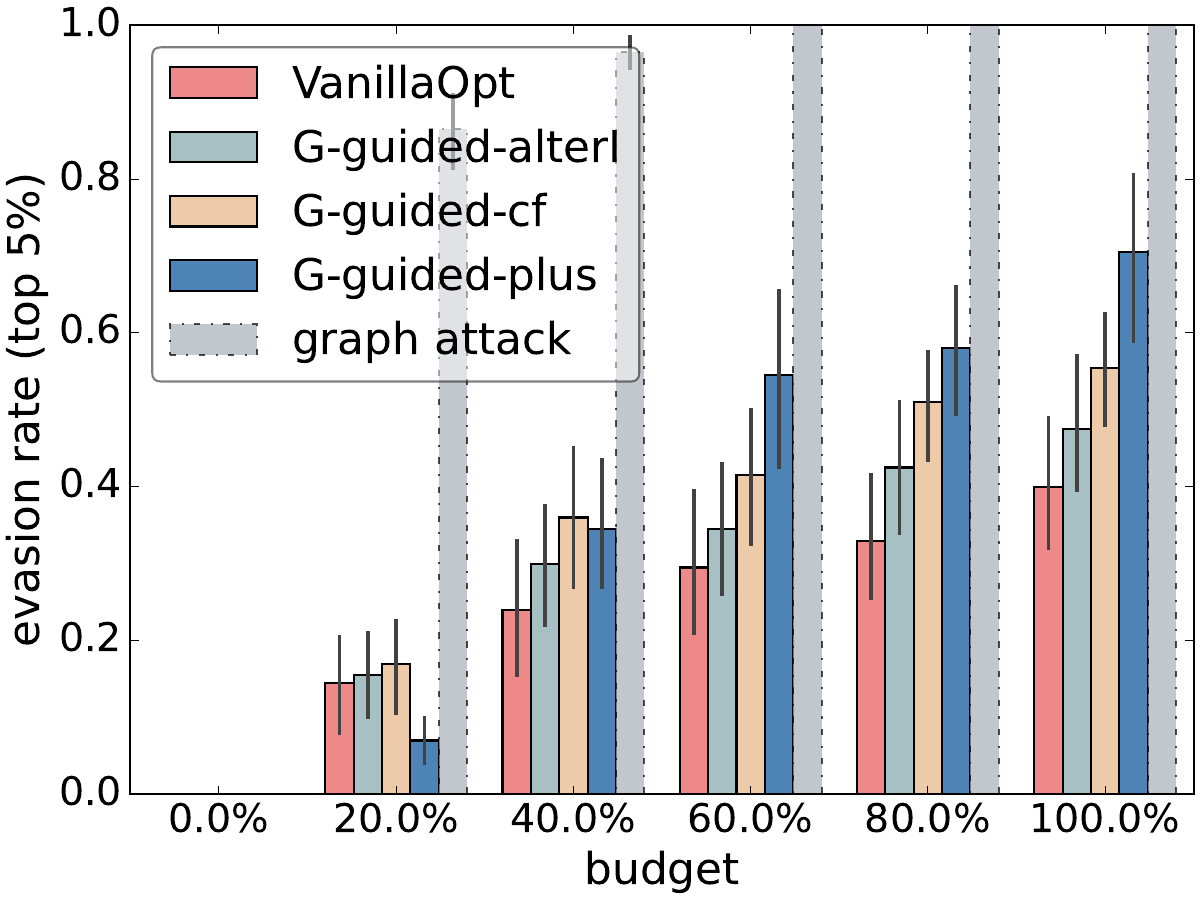}\label{fig:feature_attack_d}
    }
 \caption{Feature-space attack results.}
\label{fig:feature_attack}
\end{figure}

\begin{figure}[htb]
\centering
  \subfigure[Control node degrees (KDD-99)]{
  \includegraphics[width=0.22\textwidth,height=2.5cm]{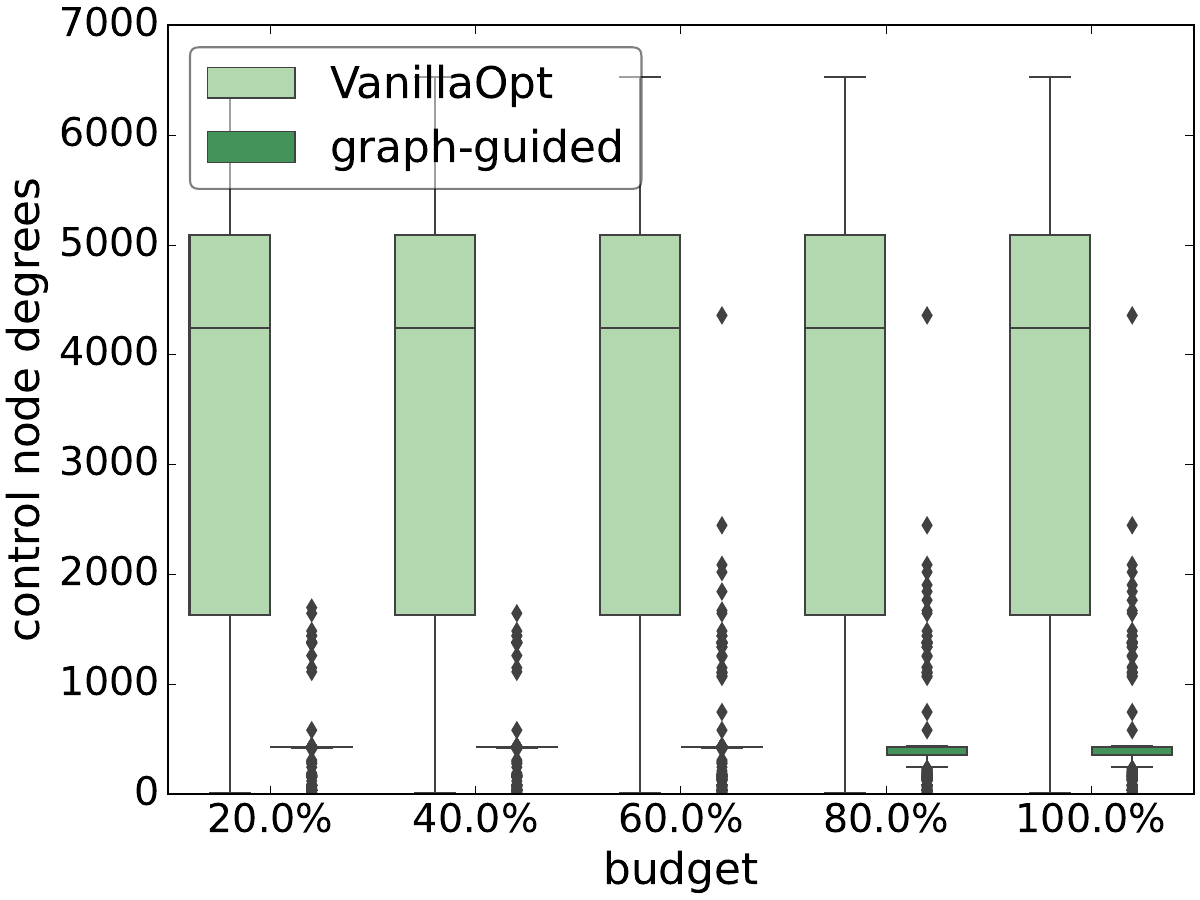}\label{fig:feature_analysis_a}}
    \subfigure[Edge modified (KDD-99)]{
    \includegraphics[width=0.22\textwidth,height=2.5cm]{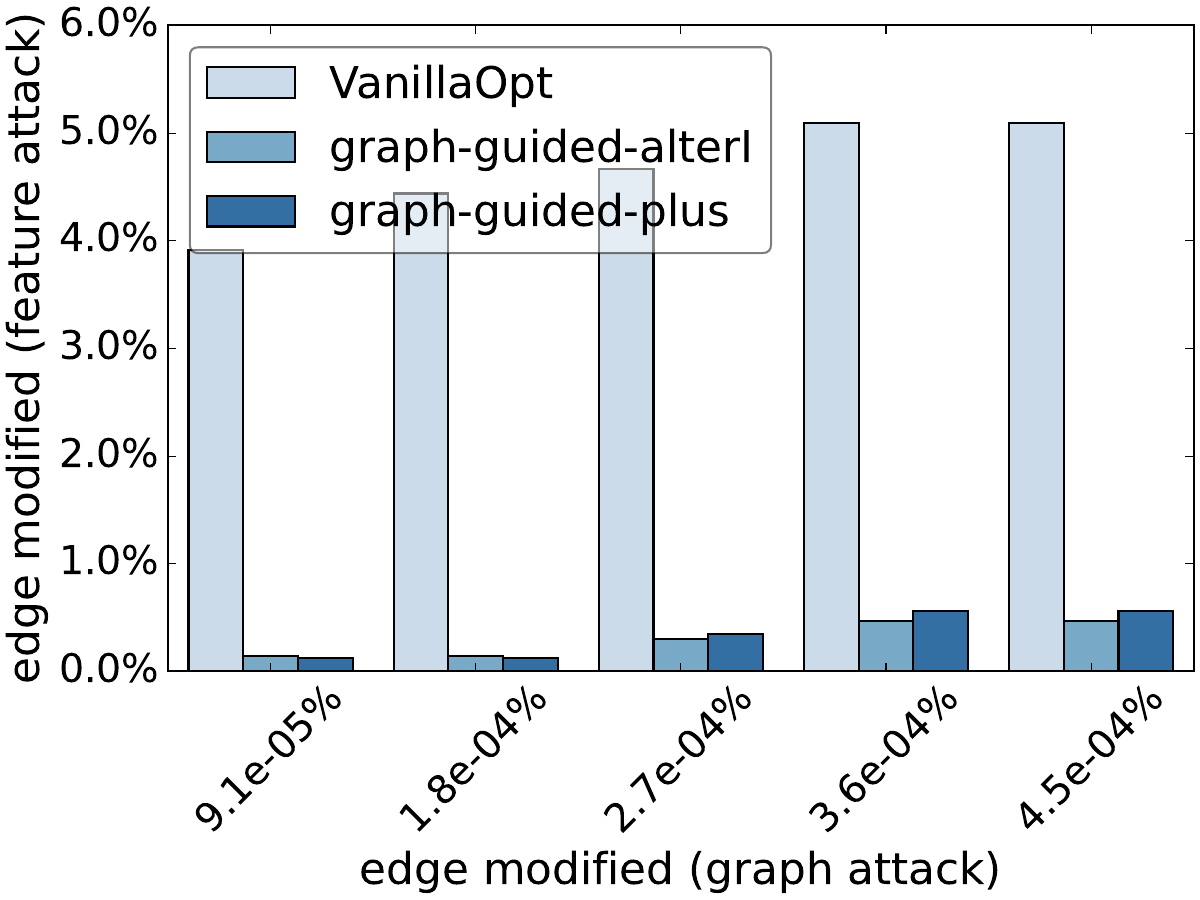}\label{fig:feature_analysis_b}}
 \caption{Result analysis of feature-space attacks.} 
\label{fig:feature_analysis}
\end{figure}

\subsection{Performances of Graph-guided Feature-space Attacks}
\label{featureAttackResults}
\subsubsection{Effectiveness of attacks}
We compare the performances of the feature-space attacks in Fig.~\ref{fig:feature_attack}. 
Our analysis shows that \textbf{G-guided-alterI} outperforms the \textbf{VanillaOpt} method, achieving much lower anomaly scores and higher evasion rates. These two models are only different in the selection of attack nodes, which indicates the effectiveness of using guidance from graph-space attacks in node selection. 
Comparing the performance of the \textbf{alterI} and \textbf{cf} attack strategies under $\mathcal{L}_a$, we observe that \textbf{cf}-attack also improves the performance, although the side effect is that \textbf{cf}-attack takes about 7 times longer than \textbf{alterI} in our experiments (Tab.~\ref{tab:runtime}). 
Additionally, \textbf{G-guided-plus} has a higher evasion rate than \textbf{G-guided-alterI} and \textbf{G-guided-cf} in most cases, indicating the advantage of using the attack loss $\mathcal{L}_g$ as further guidance for feature attack.

\subsubsection{Unnoticeability of attack}
In Fig.~\ref{fig:feature_analysis}, we provide an analysis of the feature attack highlighting its advantage of unnoticeability. As mentioned earlier, the graph attack prefers the attack nodes with lower degrees. As a result, our graph-guided attack nodes have lower node degrees compared to \textbf{VanillaOpt} (Fig.~\ref{fig:feature_analysis_a}). This leads to significantly fewer edge modifications in graph-guided attacks compared to \textbf{VanillaOpt} (Fig.~\ref{fig:feature_analysis_b}), which enhances the unnoticeability of the attack.


\begin{table}[htb]
\caption{Runtime comparison of \textbf{alterI} and \textbf{cf}-attack.}
\label{tab:runtime}
\centering
\setlength{\tabcolsep}{4.0pt}
\begin{tabular}{llcccc}
\hline
 & Attacks & Author-Paper & Magazine & KDD-99 & MNIST \\ \hline
\multirow{2}{*}{\begin{tabular}[c]{@{}l@{}}Graph\\  attack\end{tabular}} & alterI & 00:00:07 & 00:00:04 & 00:00:10 & 00:00:16 \\
 & cf & 00:00:02 & 00:00:02 & 00:00:23 & 00:00:35 \\ \hline
\multicolumn{1}{c}{\multirow{2}{*}{\begin{tabular}[c]{@{}c@{}}Feature\\ attack\end{tabular}}} & alterI & - & - & 00:00:27 & 00:00:18 \\
\multicolumn{1}{c}{} & cf & - & - & 00:03:26 & 00:01:49 \\ \hline
\end{tabular}
\vspace{-5pt}
\end{table}

\subsection{Transferability of graph-guided attack}
We transfer our feature-space attacks to several unsupervised anomaly detection models, including Beta-VAE~\cite{burgess2018understanding}, IForest~\cite{liu2008isolation}, and ECOD~\cite{li2022ecod}. 
Tab.~\ref{tab:transferAttack} shows the anomaly scores of target nodes before and after the transfer attack based on our \textbf{G-guided-alterI} and \textbf{G-guided-plus} feature attack on the KDD-99 dataset. The results indicate that the graph-guided attack with graph attack loss significantly decreases the anomaly scores of the target nodes across different models. This suggests that the graph-guided attack on RWAD has the potential to be used as a surrogate model for black-box attacks. The graph-guided attack could be a useful tool for attackers to evade detection and deceive anomaly detection systems in real-world scenarios.

\begin{table}[htb]
\centering
\caption{Transferability: The change in anomaly score (\%) compared to the clean data. Lower is better. }
\label{tab:transferAttack}
\setlength{\tabcolsep}{3.0pt}
\scalebox{0.98}{
    \begin{tabular}{ccccccc}
    \hline
        Detect Methods             &Attack Methods    & 20\%  & 40\%  & 60\%  & 80\%  & 100\% \\ \hline
        \multirow{3}{*}{Beta-VAE} & VanillaOpt           & -11.56 & -13.97 & -14.70 & -15.18 & -15.68 \\
        \multirow{3}{*}{}         & G-guided-alterI & -4.15 &  -5.65 &  -6.13 &  -7.14 & -9.711  \\
        \multirow{3}{*}{}         & G-guided-plus & \textbf{-25.26} & \textbf{-31.79} & \textbf{-33.25} & \textbf{-33.94} & \textbf{-33.99} \\ \hline
        
        \multirow{3}{*}{IForest}  & VanillaOpt                           & -10.63 & -0.06 & -8.41 & -0.82 & -11.93 \\
        \multirow{3}{*}{}         & G-guided-alterI      & 9.94 & 10.44 & -0.18 &  0.39 & -2.31  \\
        \multirow{3}{*}{}         & G-guided-plus & \textbf{-25.03} & \textbf{-44.21} & \textbf{-40.27} & \textbf{-47.58} & \textbf{-47.26} \\ \hline
        
        \multirow{3}{*}{ECOD}     & VanillaOpt                           & -2.20 & -2.72 & -2.90 & -2.988 & -3.099 \\
        \multirow{3}{*}{}         & G-guided-alterI      & -0.29 & -0.64 & -0.66 & -0.90 & -1.28 \\
        \multirow{3}{*}{}         & G-guided-plus & \textbf{-3.70} & \textbf{-5.32} & \textbf{-5.81} & \textbf{-6.01} & \textbf{-6.00} \\ \hline
    \end{tabular}
}
\end{table}

\section{Conclusion}
\label{conclusion}
In conclusion, this paper has shed light on the vulnerabilities of Random-Walk-based Anomaly Detection (RWAD), a classical and important anomaly detection tool. Specifically, we introduce a novel study of adversarial poisoning attacks on RWAD, where the graph is constructed on top of the feature space. We provide a theoretical understanding of these attacks, including proof of NP-hardness. Our approach involves proposing graph-space attacks and using the graph attack to guide the feature-space attack, which bridges the gap between these two attacks. Our experiments on four datasets, encompassing both directly and indirectly accessible graphs, demonstrate the effectiveness of our proposed graph-space attack and its ability to guide the selection of attack nodes and optimization of the attack loss for feature-space attacks. By taking RWAD as an example, our study provides valuable insights into the effectiveness of graph-space attacks and feature-space attacks. Future research can extend this work to apply RWAD for black-box attacks on other deep learning-based anomaly detection systems, without relying on labeled data or inner models. 

\balance
\bibliographystyle{IEEEtran} 	
\bibliography{IEEEabrv,references}
\end{document}